\documentclass[11pt]{article}
\usepackage[utf8]{inputenc}
\usepackage[T1]{fontenc}
\usepackage{amsmath,amssymb}
\usepackage{geometry}
\usepackage{graphicx}
\usepackage{hyperref}
\usepackage{float}
\usepackage{placeins}

\usepackage{booktabs}
\usepackage{rotating}   

\usepackage{multirow}   
\usepackage{subcaption}  
\usepackage{natbib}                
\usepackage{amsmath,amssymb,amsfonts}
\usepackage{amsthm}

\newtheorem{theorem}{Theorem}[section]

\theoremstyle{plain}
\newtheorem{lemma}{Lemma}[section]
\newtheorem{proposition}{Proposition}

\newtheorem{remark}{Remark}

\title{Equilibrium and Competition in Evolutionary Dynamics}
\author{J. Medina-Diaz, F. Pe\~{n}a-Garcia and Irbin Llanqui}
\date{}

\begin{document}
\maketitle

\begin{abstract}
A fundamental problem in protobiological dynamics is to understand how chemically generated polymers can form persistent sequence distributions before the emergence of replication. We study deterministic polymer growth in which each finite sequence is followed along its genealogical structure. The system pictures an open polymerization cascade in which each polymer is produced from a unique precursor and lost by degradation and further extension. Setting fixed activated precursors, we show global well-posedness, positivity, uniqueness of a strictly positive equilibrium, and exponential convergence to an explicit steady state distribution. Under an additional uniform decay condition, this convergence becomes global exponential stability in a uniform norm.

We then couple the polymerization to a shared environmental resource with logistic growth and depletion by two activated precursors. In the resulting binary polymerization competition model, the equilibrium structure is governed by a three-dimensional core subsystem. We prove that strictly positive equilibria exist exactly above a sharp resource threshold. At the threshold the equilibrium is unique, while above it two positive branches appear. The lower branch is unstable and the upper branch is locally stable. For the complete infinite system, we exhibit positivity, global componentwise existence, a priori bounds, and under persistence and dominance assumptions, global exponential stability.

Finally, we introduce template directed replication through a replicator term. The pre-replicative equilibrium continues only under neutral fitness, and heterogeneous fitness removes it as an equilibrium of the replicated system.
\end{abstract}

\section{Introduction}
The beginning of life is a research area that has captivated the interest of both scientist and the general public even before the revolution of molecular biology \citep{Schopf24}.
Many theories has been knitted around this enigma for decades \citep{kauffman2011approaches} and even some scientist \citep{agutter2008thinking} have argued that this topic does not fall under the scope of science.
What is undeniable is that in order to understand the transition from prebiotic chemistry to the evolutionary dynamics of life, it is necessary to apply a quantitative approach to the generation, maintenance, and differential amplification of the basic elements of life: polymers.
Among the diversity of polymers, linear polymers are the most simple and will be the object of study of the present paper.

We will follow an ecological approach here in which every polymer is an individual of a dynamical population. Hence, the persistence of each polymer relies on the equilibrium between production, decay, mutation, and differential replication. This approach has bases in the work of \citet{Eigen1971} and \citet{EigenSchuster1977}. In a pre-replicative regime, activated monomers can generate diverse polymers before template-directed copying is established. This leads to a nontrivial kinetic selection after which replication can then emerge as a qualitative transition rather than as the starting point of the dynamics \citep{Nowak2008,Manapat2009}.
 
Modern origin of life chemistry increasingly frames the problem in systems terms, stressing that the relevant reaction networks must be studied under driven conditions, with sustained fluxes, environmental cycles, and kinetic bottlenecks, rather than as isolated batch reactions \citep{Sutherland2017,Pascal2013}.
This view is supported by an expanding body of experimental work on prebiotic environments driven far from equilibrium. Thermal gradients in hydrothermal pore systems can concentrate nucleotides and thereby mitigate the dilution problem faced by prebiotic synthesis \citep{Baaske2007}. More broadly, heated rock pores, drying-wetting and freeze thaw cycles, and other physical non-equilibria can promote concentration, strand separation, and sequence-dependent copying dynamics relevant to nucleic-acid chemistry \citep{Ianeselli2023}. Likewise, continuous reaction networks and compatible one-pot activation/copying scenarios have shown how prebiotic chemistry can sustain integrated pathways toward RNA precursors and nonenzymatic RNA copying in geochemically plausible settings \citep{Yi2020,Zhang2022}. At the same time, small catalytic RNA ensembles can already exhibit cooperation, competition, and frequency-dependent reproduction, indicating that network-level ecological effects may precede full Darwinian evolution \citep{Yeates2016}.

Motivated by these developments, in this work we study a deterministic, sequence-resolved model of open polymerization on the genealogy of finite $m$-ary strings. Each polymer is produced from a unique precursor, while losses arise from basal degradation and further extension to longer sequences. The resulting dynamics is an infinite system of ordinary differential equations indexed by a countable system. In contrast with standard replicator or quasispecies descriptions, the baseline model does not assume template-directed self-replication; instead, it isolates the kinetic backbone of a generative chemistry driven by a constant input of activated precursors. This allows us to ask a mathematically precise version of a protobiological question: when does an open polymerization network, by itself, settle into a robust sequence distribution, and how is that distribution altered once competition for resources and replication are added?

Our first objective is to establish the basic theory of this polymerization cascade, then couple the precursor layer to a shared environmental resource and analyze a binary competition model in which the activated roots and the competitor evolve jointly. This reduction reveals a low-dimensional ``core system'' that determines the admissible positive steady regimes of the full infinite polymer network. We characterize its equilibria, identify the threshold condition for the existence of positive steady states, and prove positivity invariance, global existence, and explicit \textit{a priori} bounds. Finally, we add a template directed replication term of replicator type and show that the previous equilibrium survives only in the neutral fitness case. 

The next section introduces the polymerization system and begins the analysis of this fundamental unbounded cascade.

\section{Dynamics of the system of polymerization.}
Let $m\ge 2$ and let $\mathcal S:=\bigcup_{n\ge 1}\{1,\dots,m\}^n$ be the set of all finite $m$-ary strings (unactivated polymers). Note that \(\mathcal S\) is countable. For \(i=(i_1,\dots,i_n)\in\mathcal S\) we write, by abuse of notation \(i=i_1\cdots i_n\) and set \(L(i)=n\) for its length and define its unique predecessor $i'$ by
\[
i' :=
\begin{cases}
i_1^* & \text{if } n=1,\\
i_1\cdots i_{n-1} & \text{if } n\ge 2,
\end{cases}
\]
where $i_1^*$ denotes the activated monomer precursor of the monomer $i_1$.
We assume the activated monomer concentrations are held constant,
\[
x_{1^*}=x_{2^*}=\cdots=x_{m^*}=1.
\]

For each \(i\in\mathcal S\), we call 
\[
i1,\dots,im
\]
the followers of \(i\). The deterministic kinetics of polymerization are given by the infinite family of linear ODEs
\begin{equation}\label{eq:polymerization}
\dot x_{i} \;=\; a_{i}\,x_{i'}-\Bigl(d+\sum_{j=1}^{m} a_{ij}\Bigr)x_{i},
\qquad i\in\mathcal S,
\end{equation}
where $a_i>0$ is the rate for the extension $i'\to i$, $a_{ij}\ge 0$ represents the loss of $i$ due to further extension to its followers, and $d>0$ is the basal degradation rate.
\begin{remark}[On the notation for monomers and finite words]
For simplicity, we label the monomer types by \(1,2,\dots,m\) and write a polymer as \(i=i_1\cdots i_n\).
If \(m\ge 10\), this notation may become ambiguous because symbols with two digits are written without separators; for example, if \(m\ge 12\), ``112'' may represent either \((1,12)\) or \((11,2)\). This is only a notational issue.
\end{remark}
To prevent any ambiguity, one may equivalently write a polymer as the finite tuple
\[
i=(i_1,\dots,i_n)
\]
instead of the shorthand \(i=i_1\cdots i_n\). All definitions and results remain unchanged. Henceforth, we keep the shorthand \(i=i_1\cdots i_n\), always interpreting it as a finite tuple of symbols, not as decimal concatenation.

Under this convention, we can state the basic dynamical properties of the polymerization system~\eqref{eq:polymerization}.

\begin{theorem}[Unique steady state and componentwise exponential convergence]
\label{thm:global-polymerization}
Let $d>0$, $a_i>0$ and $a_{ij}\ge 0$. Then:

\begin{enumerate}
\item \textbf{(Global well-posedness in the product topology and positivity)} For every initial condition
$\mathbf{x}(0)=(x_i(0))_{i\in\mathcal S}\in \mathbb R_{\ge 0}^{\mathcal S}$,
system \eqref{eq:polymerization} admits a unique global solution
$\mathbf{x}(t)\in \mathbb R_{\ge 0}^{\mathcal S}$ for all $t\ge 0$.

\item \textbf{(Unique positive equilibrium)} There exists a unique equilibrium
$\mathbf{x}^*=(x_i^*)_{i\in\mathcal S}\in \mathbb R_{>0}^{\mathcal S}$, given recursively by
\begin{equation}\label{eq:equilibrium-recursion}
x_{k}^*=\frac{a_k}{d+\sum_{j=1}^m a_{kj}}\quad (k\in\{1,\dots,m\}),
\qquad
x_i^*=\frac{a_i}{d+\sum_{j=1}^m a_{ij}}\,x_{i'}^* \quad (L(i)\ge 2).
\end{equation}
Equivalently, if $i=i_1\cdots i_n$ and $i^{(r)}:=i_1\cdots i_r$ denotes its prefix of length $r$,
\begin{equation}\label{eq:equilibrium-product}
x_i^*=\prod_{r=1}^{n}\frac{a_{i^{(r)}}}{d+\sum_{j=1}^m a_{i^{(r)}j}}.
\end{equation}

\item \textbf{(Global componentwise exponential stability)}
For every initial condition as in (1) and every $i\in\mathcal S$, there exists a constant $K_i>0$ such that
\[
|x_i(t)-x_i^*|\le K_i e^{-\delta t}\qquad (t\ge 0),
\]
where $\delta:=d/2$. In particular, $\mathbf{x}(t)\to \mathbf{x}^*$ componentwise exponentially as $t\to\infty$.
\end{enumerate}
\end{theorem}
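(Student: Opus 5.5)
The plan exploits the triangular (tree) structure of \eqref{eq:polymerization}. Each equation for $x_i$ involves only $x_i$ and its predecessor $x_{i'}$, so I will argue by induction on the length $L(i)$. Write $\lambda_i:=d+\sum_{j=1}^m a_{ij}\ge d>0$. For $L(i)=1$ the forcing is $a_i x_{i_1^*}=a_i$, a constant. Variation of constants then gives the explicit solution
\[
x_i(t)=e^{-\lambda_i t}x_i(0)+a_i\int_0^t e^{-\lambda_i(t-s)}x_{i'}(s)\,ds .
\]
Assuming $x_{i'}$ is already known to be a unique, continuous, nonnegative global solution, this formula defines $x_i$ uniquely and globally. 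Uniqueness holds because the scalar linear ODE has a continuous forcing term. Nonnegativity holds because both terms are nonnegative. Since every solution in the product topology must satisfy each scalar equation, this proves (1): existence and uniqueness are componentwise, with no need for a Banach-space setting.

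For (2), setting $\dot x_i=0$ gives $x_i^*=(a_i/\lambda_i)x_{i'}^*$ with $x_{i_1^*}=1$. Induction on length then forces \eqref{eq:equilibrium-recursion}, and unrolling the recursion along the prefixes $i^{(1)},\dots,i^{(n)}$ gives \eqref{eq:equilibrium-product}. Each factor is positive because $a_i>0$, so the equilibrium is strictly positive. It is unique because the recursion leaves no freedom at any step. In fact this argument shows uniqueness among all equilibria in $\mathbb R^{\mathcal S}$, not only the positive ones.

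For (3), set $e_i:=x_i-x_i^*$. Then $\dot e_i=-\lambda_i e_i+a_i e_{i'}$ for $L(i)\ge2$, and $\dot e_i=-\lambda_i e_i$ for $L(i)=1$. I will prove by induction on length the claim $|e_i(t)|\le K_i e^{-\delta t}$ with $\delta=d/2$. The base case is immediate, since $|e_i(t)|\le |e_i(0)|e^{-dt}$. For the inductive step,
\[
|e_i(t)|\le e^{-\lambda_i t}|e_i(0)|+a_iK_{i'}\int_0^t e^{-\lambda_i(t-s)}e^{-\delta s}\,ds .
\]
Because $\lambda_i-\delta\ge d/2>0$, the integral is at most $e^{-\delta t}/(\lambda_i-\delta)\le 2e^{-\delta t}/d$. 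This yields
\[
K_i:=|e_i(0)|+\frac{2a_iK_{i'}}{d}.
\]
If necessary, I add $1$ to each $K_i$ to keep it strictly positive as the statement requires.

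The only delicate point is the loss of rate in this step. If one tried to keep the rate $d$ itself, the convolution of two exponentials with possibly equal rates $\lambda_i=\lambda_{i'}=d$ would produce a polynomial factor $t^{n-1}e^{-dt}$. Choosing $\delta=d/2$ strictly below every $\lambda_i$ absorbs this factor at each level, at the cost of constants $K_i$ that may grow with $L(i)$. This is why the stability is only componentwise, with no uniformity across $\mathcal S$.
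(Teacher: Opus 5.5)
Your proposal is correct and follows the paper's proof essentially step for step. Both arguments use the recursive construction by length via variation of constants, the lineage recursion and product formula with uniqueness by induction on length, and the inductive error estimate with $\delta=d/2$ based on $\int_0^t e^{-\lambda_i(t-s)}e^{-\delta s}\,ds\le e^{-\delta t}/(\lambda_i-\delta)$. Your choice to enlarge $K_i$ so that it is strictly positive is a harmless refinement, since the paper's recursion gives $K_i=0$ when starting at equilibrium.
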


\begin{proof}
Set
\[
k_i:=d+\sum_{j=1}^m a_{ij}\ge d,
\qquad a_i>0,
\]
so \eqref{eq:polymerization} becomes
\begin{equation}\label{eq:triangular}
\dot x_i = a_i x_{i'}-k_i x_i,\qquad i\in\mathcal S.
\end{equation}

\medskip

Fix an initial condition $\mathbf{x}(0)=(x_i(0))_{i\in\mathcal S}\in\mathbb R_{\ge 0}^{\mathcal S}$. For each $n\ge 1$, let $\mathcal S_{\le n}:=\{i\in\mathcal S:\ L(i)\le n\}$. We shall construct recursively a family of functions
\[
x^{(n)}:\ [0,\infty)\to \mathbb R^{\mathcal S_{\le n}},
\qquad
x^{(n)}(t)=\bigl(x^{(n)}_i(t)\bigr)_{i\in\mathcal S_{\le n}},
\]
such that $x^{(n)}_i(0)=x_i(0)$ for all $i\in\mathcal S_{\le n}$ and $x^{(n)}$ satisfies \eqref{eq:triangular} for every $i\in\mathcal S_{\le n}$, where the term $x_{i'}$ is interpreted as $x_{\lambda^*}\equiv 1$ whenever $i$ is a monomer. Hence, for each monomer $i$ we have the scalar linear ODE
\[
\dot x_i(t)=a_i-k_i x_i(t),
\qquad x_i(0)\ \text{given},
\]
whose unique global solution is
\begin{equation}\label{eq:monomer-solution}
x^{(1)}_i(t)=e^{-k_i t}x_i(0)+\frac{a_i}{k_i}\bigl(1-e^{-k_i t}\bigr),
\qquad t\ge 0.
\end{equation}
This defines the continuos map $x^{(1)}$ uniquely on $[0,\infty)$. Assume that for some $n\ge 1$ we have already constructed a unique continuous function $x^{(n)}$ on $[0,\infty)$.
Let $i\in\mathcal S$ with $L(i)=n+1$. Then its predecessor $i'$ has length $n$, hence
$t\mapsto x^{(n)}_{i'}(t)$ is a known continuous function on $[0,\infty)$.
Define $x^{(n+1)}_i$ as the unique solution to the nonautonomous scalar linear ODE
\begin{equation}\label{eq:length-nplus1}
\dot x_i(t)=a_i\,x^{(n)}_{i'}(t)-k_i x_i(t),
\qquad x_i(0)\ \text{given}.
\end{equation}
By the variation of constants, this solution is explicitly
\begin{equation}\label{eq:voc}
x^{(n+1)}_i(t)
=
e^{-k_i t}x_i(0)
+
a_i\int_0^t e^{-k_i(t-s)}\,x^{(n)}_{i'}(s)\,ds,
\qquad t\ge 0,
\end{equation}
and it is global because the right-hand side is well-defined for every $t\ge 0$.
For indices of length $\le n$, set $x^{(n+1)}_\ell(t):=x^{(n)}_\ell(t)$ for all $\ell\in\mathcal S_{\le n}$.
Thus we obtain a unique $x^{(n+1)}$ on $[0,\infty)$.

\medskip

If $\ell\in\mathcal S_{\le n}$, then by construction $x^{(n+1)}_\ell=x^{(n)}_\ell$ on $[0,\infty)$.
Hence, for each fixed $i\in\mathcal S$, the definition
\[
x_i(t):=x^{(L(i))}_i(t),\qquad t\ge 0,
\]
is independent of the truncation level and yields a well-defined function $x_i:[0,\infty)\to\mathbb R$. This is the \textit{consistency property}. Fix $i\in\mathcal S$ and set $n=L(i)$. Then $x_i=x^{(n)}_i$ and, if $n\ge 2$, $x_{i'}=x^{(n-1)}_{i'}$ by the consistency property (while for $n=1$ we interpret $x_{i'}\equiv 1$). Therefore $x_i$ satisfies \eqref{eq:triangular} for all $t\ge 0$.

\medskip

Let $\tilde{\mathbf{x}}(t)$ be any other family satisfying \eqref{eq:triangular} with the same initial data.
By \eqref{eq:monomer-solution}, $\tilde{x}_i=x_i$ for all monomers. Assume inductively that $\tilde{x}_{i'}=x_{i'}$ for all strings $i$ of length $\le n$. Then for any $i$ of length $n+1$, both $x_i$ and $\tilde{x}_i$ solve the same scalar linear ODE \eqref{eq:length-nplus1} with the same initial data and the same forcing term $x_{i'}(t)=\tilde{x}_{i'}(t)$. By uniqueness for scalar linear ODEs, $\tilde{x}_i=x_i$. Thus $\tilde{\mathbf{x}}=\mathbf{x}$ on $\mathcal S$. This shows existence and uniqueness of a global solution $\mathbf{x}(t)=(x_i(t))_{i\in\mathcal S}$ for all $t\ge 0$.

\medskip

Hence, nonnegativity is maintained. If $x_i(0)\ge 0$ and $x_{i'}(t)\ge 0$, then the explicit solution formula shows $x_i(t)\ge 0$ for all $t\ge 0$. Induction on length yields $\mathbf{x}(t)\in\mathbb R_{\ge 0}^{\mathcal S}$.

\medskip

Let us assume now the equilibrium values on the activated precursors by
\[
x_{\lambda^*}^*=1,\qquad \lambda=1,\dots,m.
\]
For each string $i\in\mathcal S$ of length $L(i)\ge 1$, define $x_i^*$ recursively 
\begin{equation}\label{eq:eq-rec-step2}
x_i^* := \frac{a_i}{k_i}\,x_{i'}^*.
\end{equation}

\medskip

To show that \eqref{eq:eq-rec-step2} uniquely determines a number $x_i^*$ for every $i\in\mathcal S$,
we use the fact that each sequence in $\mathcal S$ is uniquely determined by its string of prefixes. Let $i=i_1\cdots i_n\in\mathcal S$ and define its prefixes
\[
i^{(r)}:=i_1\cdots i_r,\qquad r=1,\dots,n.
\]
By construction of the predecessor map, one has $(i^{(r)})'=i^{(r-1)}$ for $r\ge 2$ and $(i^{(1)})'=i_1^*$.
Hence applying \eqref{eq:eq-rec-step2} repeatedly along the unique lineage
\[
i_1^* \to i^{(1)} \to i^{(2)} \to \cdots \to i^{(n)}=i,
\]
yields the explicit product representation
\begin{equation}\label{eq:eq-product-step2}
x_i^* = \prod_{r=1}^{n} \frac{a_{i^{(r)}}}{k_{i^{(r)}}}.
\end{equation}
In particular, the value of $x_i^*$ depends only on the unique predecessor chain of $i$, so there is no ambiguity. Since $a_{i^{(r)}}>0$ and $k_{i^{(r)}}>0$ for all $r$, \eqref{eq:eq-product-step2} implies $x_i^*>0$ for every $i\in\mathcal S$.

\medskip

Let $\mathbf{y}=(y_i)_{i\in\mathcal S}$ be any equilibrium consistent with the fixed precursor values,
i.e., assume $y_{\lambda^*}=1$ for $\lambda=1,\dots,m$ and
\[
0=a_i y_{i'}-k_i y_i,\qquad i\in\mathcal S.
\]
Then for each $i\in\mathcal S$,
\begin{equation}\label{eq:equil-relation-y}
y_i=\frac{a_i}{k_i}\,y_{i'}.
\end{equation}
We prove by induction on length $L(i)$ that $y_i=x_i^*$ for all $i\in\mathcal S$. If $L(i)=1$, then $i'=i_1^*$ for some $i_1\in\{1,\dots,m\}$, so using $y_{i_1^*}=1$ in
\eqref{eq:equil-relation-y} gives
\[
y_i=\frac{a_i}{k_i}=\frac{a_i}{k_i}\,x_{i'}^*=x_i^*.
\]
Assume now that $y_\ell=x_\ell^*$ for all strings $\ell$ of length $\le n$.
Let $i$ have length $n+1$. Then $i'$ has length $n$, hence by the induction hypothesis $y_{i'}=x_{i'}^*$.
Using \eqref{eq:equil-relation-y} and \eqref{eq:eq-rec-step2},
\[
y_i=\frac{a_i}{k_i}y_{i'}=\frac{a_i}{k_i}x_{i'}^*=x_i^*.
\]

Therefore the equilibrium is unique.

\medskip

Now, let us see the convergence part. Fix $\delta:=d/2$ and define the error $e_i(t):=x_i(t)-x_i^*$. Subtracting the steady state relation $0=a_i x_{i'}^*-k_i x_i^*$ from \eqref{eq:triangular} gives
\[
\dot e_i = a_i e_{i'}-k_i e_i.
\]
For $L(i)=1$ we have $\dot e_i=-k_i e_i$ hence $|e_i(t)|\le |e_i(0)|e^{-k_i t}\le |e_i(0)|e^{-\delta t}$. Assume that for some $n\ge 1$, all strings $i'$ with $L(i')\le n$ satisfy
$|e_{i'}(t)|\le K_{i'}e^{-\delta t}$ for all $t\ge 0$.
Let $i$ have length $n+1$. Then,
\[
e_i(t)=e^{-k_i t}e_i(0)+a_i\int_0^t e^{-k_i(t-s)}e_{i'}(s)\,ds.
\]
Using the induction hypothesis,
\[
|e_i(t)|\le |e_i(0)|e^{-k_i t}+a_i K_{i'}\int_0^t e^{-k_i(t-s)}e^{-\delta s}\,ds.
\]
Since $k_i\ge d>\delta$, the integral equals
\[
\int_0^t e^{-k_i(t-s)}e^{-\delta s}\,ds
=\frac{e^{-\delta t}-e^{-k_i t}}{k_i-\delta}
\le \frac{e^{-\delta t}}{k_i-\delta}.
\]
Therefore
\[
|e_i(t)|\le \left(|e_i(0)|+\frac{a_i K_{i'}}{k_i-\delta}\right)e^{-\delta t}.
\]
\end{proof}

Therefore, the equilibrium $\mathbf{x}^*$ given by \eqref{eq:eq-rec-step2}
is \emph{globally componentwise exponentially stable} in the following sense,
for every nonnegative initial condition $\mathbf{x}(0)\in\mathbb R_{\ge 0}^{\mathcal S}$ and every $i\in\mathcal S$,
there exists a constant $K_i>0$ such that
\[
|x_i(t)-x_i^*|\le K_i e^{-\delta t}\qquad (t\ge 0),
\]
with $\delta=d/2$.
In particular, $x_i(t)\to x_i^*$ exponentially for each $i$, and hence each coordinate $x_i(t)$ remains bounded for all $t\ge 0$.

Moreover, since $k_i=d+\sum_{j=1}^m a_{ij}\ge d$, the rate $d$ gives a uniform lower bound on the dissipation in every equation and hence yields a uniform exponential convergence rate $\delta$, independent of $i$, in the above estimate. Finally, $\mathbf{x}^*$ is the \emph{unique} equilibrium in $\mathbb R_{>0}^{\mathcal S}$ consistent with the fixed precursor values
$x_{1^*}=\cdots=x_{m^*}=1$.

\medskip

This result says that an open prebiotic polymerization network fed by a continuous input of activated monomers gives rise to a \emph{robust steady-state composition}. No matter how the initial mixture of short and long polymers is distributed, the concentration of each specific sequence \(i\) converges exponentially toward a unique steady value, which represents the long-time ``availability'' or ``background abundance'' of that sequence in a continuously driven environment (e.g., hydrothermal pores), where activation sustains the monomer supply and polymers are continually lost through degradation and further extension.
\medskip

The parameter \(d\) plays the role of a ubiquitous environmental turnover mechanism, such as hydrolysis, dilution, UV damage, or mechanical breakup, affecting every polymer regardless of sequence. Because \(k_i=d+\sum_j a_{ij}\ge d\), every equation retains at least \(d\) units of dissipation, which yields a uniform exponential convergence rate (\(\delta=d/2\)) across the whole sequence space. Thus even very long or rarely produced polymers cannot blow up in concentratio. Finally, uniqueness means the system has no alternative stable ``macrostates'' under fixed feeding so, in this linear, purely kinetic setting,
persistent changes in composition (a prerequisite for selection-like behavior) would require changing the environment/inputs (e.g., activation levels),
adding nonlinear effects (template-directed replication, catalysis), or introducing compartmentalization.

\medskip

While the previous discussion establishes \emph{componentwise} exponential convergence, in many applications it is useful to control the \emph{entire} sequence distribution at once in a uniform norm. In particular, one may naturally ask whether this convergence admits a stronger form as an exponential contraction in \(\ell^\infty(\mathcal S)\), meaning that the maximal deviation over \emph{all} polymers decays at a single rate, independent of sequence length. The next theorem provides precisely this stronger global stability statement under a natural uniform \emph{net-decay gap} condition, \(\inf_{i\in\mathcal S}(k_i-a_i)>0\), which ensures that every extension step is dominated by an intrinsic decay margin.

\begin{theorem}[$\ell^\infty$-global exponential stability under a uniform net-decay gap]
\label{thm:linfty-stability}
Consider the polymerization system \eqref{eq:triangular}. Assume the following uniform \emph{net-decay gap}:
\begin{equation}\label{eq:gap}
\gamma \;:=\; \inf_{i\in\mathcal S}(k_i-a_i)\;>\;0.
\end{equation}
Then

\begin{enumerate}
\item The equilibrium $\mathbf{x}^*=(x_i^*)_{i\in\mathcal S}$ defined by
\[
x_{\lambda^*}^*:=1\ (\lambda=1,\dots,m),\qquad x_i^*:=\frac{a_i}{k_i}\,x_{i'}^*\quad(i\in\mathcal S)
\]
satisfies $\mathbf{x}^*\in\ell^\infty(\mathcal S)$ and in fact $0<x_i^*\le 1$ for all $i\in\mathcal S$.

\item For every bounded initial condition $\mathbf{x}(0)\in\ell^\infty(\mathcal S)$
(with $\mathbf{x}(0)\ge 0$ componentwise), the corresponding solution satisfies
$\mathbf{x}(t)\in\ell^\infty(\mathcal S)$ for all $t\ge 0$ and
\begin{equation}\label{eq:linf-exp}
\|\mathbf{x}(t)-\mathbf{x}^*\|_{\infty}
\;\le\;
e^{-\gamma t}\,\|\mathbf{x}(0)-\mathbf{x}^*\|_{\infty},
\qquad t\ge 0.
\end{equation}
In particular, $\mathbf{x}^*$ is globally exponentially stable on $\ell^\infty(\mathcal S)$ with rate $\gamma$.
\end{enumerate}
\end{theorem}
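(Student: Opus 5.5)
Part (1) follows from the gap condition alone. Since $k_i-a_i\ge\gamma>0$ for every $i$, each ratio satisfies $0<a_i/k_i<1$. The product representation \eqref{eq:equilibrium-product} from Theorem~\ref{thm:global-polymerization} writes $x_i^*$ as a finite product of such ratios, so $0<x_i^*\le 1$ (indeed $<1$), and therefore $\mathbf{x}^*\in\ell^\infty(\mathcal S)$ with $\|\mathbf{x}^*\|_\infty\le 1$.

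For part (2), I take the unique global solution given by Theorem~\ref{thm:global-polymerization}(1) and work with the error $e_i(t)=x_i(t)-x_i^*$. As in that proof, it satisfies
\[
\dot e_i=a_ie_{i'}-k_ie_i, \qquad e_{\lambda^*}\equiv 0 .
\]
The key step is an estimate uniform in length. Set $M:=\|\mathbf{x}(0)-\mathbf{x}^*\|_\infty$, which is finite because both $\mathbf{x}(0)$ and $\mathbf{x}^*$ are bounded. I will prove by induction on $L(i)$ that
\[
|e_i(t)|\le Me^{-\gamma t}\qquad\text{for all } t\ge 0 \text{ and all } i\in\mathcal S .
\]
For a monomer, $e_i(t)=e^{-k_it}e_i(0)$, and $k_i\ge k_i-a_i\ge\gamma$, so the bound holds. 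For the inductive step, assume the bound for $i'$. Variation of constants gives
\[
|e_i(t)|\le e^{-k_it}M+a_iM\int_0^te^{-k_i(t-s)}e^{-\gamma s}\,ds .
\]
Since $k_i>\gamma$, the integral equals $(e^{-\gamma t}-e^{-k_it})/(k_i-\gamma)$. Hence
\[
|e_i(t)|\le M\Bigl[e^{-k_it}\Bigl(1-\tfrac{a_i}{k_i-\gamma}\Bigr)+\tfrac{a_i}{k_i-\gamma}e^{-\gamma t}\Bigr].
\]
Put $\theta:=a_i/(k_i-\gamma)$. The gap condition gives $a_i\le k_i-\gamma$, so $\theta\in(0,1]$. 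The bracket is then a convex combination of $e^{-k_it}$ and $e^{-\gamma t}$, both of which are at most $e^{-\gamma t}$. This closes the induction.

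The constant $M$ does not grow with length, so taking the supremum over $i$ gives \eqref{eq:linf-exp}. Boundedness of the solution then follows from $\|\mathbf{x}(t)\|_\infty\le\|\mathbf{x}^*\|_\infty+M$.

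The main obstacle is keeping the constant from inflating along the lineage. The estimate in Theorem~\ref{thm:global-polymerization} lets $K_i$ grow at each step, which is why it gives only componentwise convergence. The convex-combination observation is exactly where the gap $k_i-a_i\ge\gamma$ enters: it makes the inductive step an exact $\ell^\infty$ contraction.

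An alternative I would keep in reserve is a direct maximum-principle argument on $\sup_i|e_i|$. That route needs care because the supremum over a countable set need not be attained, so I prefer the length induction, which avoids the issue entirely.
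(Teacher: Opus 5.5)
Your proof is correct, but it takes a different route from the paper's. The paper uses essentially the maximum-principle argument you kept in reserve, and it avoids the non-attainment issue by truncation. It works with $M_n(t):=\max_{i\in\mathcal S_{\le n}}|e_i(t)|$ over the finite sets $\mathcal S_{\le n}$. These sets are closed under $i\mapsto i'$, with $e_{\lambda^*}\equiv 0$ at the roots, so for a maximizing index the max-Dini estimate gives $D^+M_n\le (a_i-k_i)M_n\le-\gamma M_n$. A Gr\"onwall lemma for Dini derivatives then yields $M_n(t)\le e^{-\gamma t}M_n(0)$, and the $\ell^\infty$ bound follows by taking $\sup_n$.

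Your length induction needs no nonsmooth calculus, since it reuses the variation-of-constants formula from Theorem~\ref{thm:global-polymerization}. The convex combination with weight $a_i/(k_i-\gamma)\le 1$ shows exactly where the gap enters. It also explains why the constants $K_i$ of that theorem no longer grow along lineages.

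The paper's version needs only $D^+|e_i|\le a_i|e_{i'}|-k_i|e_i|$ and the row condition $a_i-k_i\le-\gamma$, not the explicit integral. It would therefore carry over verbatim to, say, time-dependent rates with a uniform gap, where your computation would need to become a comparison argument. It also parallels the truncated Lyapunov scheme ($V_N$, Gr\"onwall, then $N\to\infty$) used later in Theorem~\ref{thm:core-GAS}.
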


\begin{proof}
From \eqref{eq:gap} we have $k_i\ge a_i+\gamma>a_i$, hence $0<\frac{a_i}{k_i}<1$ for all $i$. Iterating the recursion $x_i^*=\frac{a_i}{k_i}x_{i'}^*$ along the unique predecessor chain starting from $x_{\lambda^*}^*=1$ yields $0<x_i^*\le 1$ for all $i\in\mathcal S$, and thus
$\|\mathbf{x}^*\|_\infty\le 1$. Let $e_i(t):=x_i(t)-x_i^*$ and $\mathbf{e}(t):=\mathbf{x}(t)-\mathbf{x}^*$.
Since $x_{\lambda^*}(t)\equiv x_{\lambda^*}^*\equiv 1$, we have $e_{\lambda^*}(t)\equiv 0$ for $\lambda=1,\dots,m$. Subtracting the steady-state identity $0=a_i x_{i'}^*-k_i x_i^*$ from \eqref{eq:triangular} gives,
for all $i\in\mathcal S$,
\begin{equation}\label{eq:error}
\dot e_i \;=\; a_i e_{i'}-k_i e_i .
\end{equation}

\medskip

For \(n\ge1\), define
\[
M_n(t):=\max_{i\in\mathcal S_{\le n}} |e_i(t)|.
\]
Since \(\mathcal S_{\le n}\) is finite and each \(e_i\in C^1([0,\infty))\),
the function \(M_n\) is locally Lipschitz. Let
\[
I_n(t):=\{i\in\mathcal S_{\le n}: |e_i(t)|=M_n(t)\}.
\]

Then the standard max-Dini estimate gives
\[
D^+M_n(t)\le \max_{i\in I_n(t)} D^+|e_i(t)|.
\]

Now, for each \(i\in\mathcal S\),
\[
D^+|e_i(t)|\le a_i|e_{i'}(t)|-k_i|e_i(t)|.
\]
Indeed, if \(e_i(t)\neq0\), then
\[
\frac{d}{dt}|e_i(t)|=\operatorname{sgn}(e_i(t))\dot e_i(t)
\le a_i|e_{i'}(t)|-k_i|e_i(t)|.
\]
If \(e_i(t)=0\), then
\[
D^+|e_i(t)| = |\dot e_i(t)| = a_i|e_{i'}(t)|,
\]
which is the same inequality since \(|e_i(t)|=0\). Therefore, for any \(i\in I_n(t)\),
\[
D^+M_n(t)\le a_i|e_{i'}(t)|-k_i|e_i(t)|.
\]
Because \(i'\in\mathcal S_{\le n}\) (or \(i'=\lambda^*\), in which case \(e_{i'}=0\)),
we have \(|e_{i'}(t)|\le M_n(t)\) and \(|e_i(t)|=M_n(t)\). Hence
\[
D^+M_n(t)\le (a_i-k_i)M_n(t)\le -\gamma M_n(t).
\]
By Grönwall’s inequality for Dini derivatives,
\[
M_n(t)\le e^{-\gamma t}M_n(0),\qquad t\ge0.
\]
Since
\[
\|\mathbf e(t)\|_\infty=\sup_{i\in\mathcal S}|e_i(t)|=\sup_{n\ge1}M_n(t),
\]
and \(\mathbf x(0)\in \ell^\infty(\mathcal S)\) implies \(\mathbf e(0)\in \ell^\infty(\mathcal S)\),
we have \(M_n(0)\le \|\mathbf e(0)\|_\infty\) for every \(n\). Therefore, taking the supremum over \(n\) in
\[
M_n(t)\le e^{-\gamma t}M_n(0), \qquad t\ge0,
\]
we get
\[
\|\mathbf x(t)-\mathbf x^*\|_\infty
=
\|\mathbf e(t)\|_\infty
=
\sup_{n\ge1}M_n(t)
\le
e^{-\gamma t}\sup_{n\ge1}M_n(0)
=
e^{-\gamma t}\|\mathbf e(0)\|_\infty
=
e^{-\gamma t}\|\mathbf x(0)-\mathbf x^*\|_\infty .
\]
This is exactly \eqref{eq:linf-exp}. In particular, since \(\mathbf x^*\in \ell^\infty(\mathcal S)\), it follows that
\(\mathbf x(t)=\mathbf x^*+\mathbf e(t)\in \ell^\infty(\mathcal S)\) for all \(t\ge0\).
\end{proof}

\medskip

\medskip

Theorem~\ref{thm:linfty-stability} shows that, for the linear polymerization cascade \eqref{eq:triangular}, a uniform net decay gap enforces a strong form of robustness, the recursively defined steady profile is bounded and attracts all bounded initial data exponentially fast in $\ell^\infty(\mathcal S)$. After characterizing this fundamental behavior, we now turn to a coupled setting in which the precursor layer is no longer prescribed, but evolves dynamically together with an environmental resource variable $C$ representing the concentration of the shared competing substrate. This leads naturally to the analysis of a finite-dimensional \emph{core system}, i.e., the subsystem for the precursor concentrations and the resource variable, here $(x_{1'},x_{2'},C)$. Its equilibria determine the admissible steady regimes to which the full polymer distribution can then be attached through the same lineage recursion.
\section{Equilibrium of the ``core system"}
Now, we will analyze system \eqref{eq:5} qualitatively in order to explore conditions for the stability of the extended polymerization model.

\begin{equation}\label{eq:5}
 \begin{aligned}
        \dot x_{1'}&=1-\beta_1 Cx_{1'},\\
        \dot x_{2'}&=1-\beta_2 Cx_{2'},\\
        \dot x_{i}&= a_{i}x_{i^{'}}-(d+ a_{i1}+a_{i2})x_{i} \text{ for $i\in\mathcal{S}$} , \\
        \dot C&=C[r(1-\dfrac{C}{K})-\beta_1 x_{1'}-\beta_2 x_{2'}]. \\
\end{aligned}
\end{equation}
Here, $C=C(t)$ denotes the concentration (or abundance) of the shared environmental component regulating the dynamics, and $K>0$ is its carrying capacity in the absence of depletion by the subpopulations. The parameters $\beta_1,\beta_2>0$ are interaction coefficients. $\beta_1$ measures the strength of the coupling between $C$ and $x_{1'}$, while $\beta_2$ measures the analogous coupling between $C$ and $x_{2'}$. In particular, the terms $\beta_1 Cx_{1'}$ and $\beta_2 Cx_{2'}$ represent the $C$-mediated loss of $x_{1'}$ and $x_{2'}$, respectively, whereas $\beta_1 x_{1'}$ and $\beta_2 x_{2'}$ represent the depletion of \(C\) induced by them.

\subsection{Qualitative analysis and global stability of system \eqref{eq:5}}

Let $\mathcal S:=\bigcup_{n\ge1}\{1,2\}^n$ be the set of all finite binary strings, with the convention that each sequence $i\in\mathcal S$ has a unique precursor $i' \in \mathcal S\cup\{1',2'\}$ and two followers $i1,i2\in\mathcal S$. At steady state ($\dot x_{1'}=\dot x_{2'}=\dot x_i=\dot C=0$), the first two equations of \eqref{eq:5} give
\begin{equation}\label{eq:core-root}
x_{1'}^{*}=\frac{1}{\beta_1 C^{*}},
\qquad
x_{2'}^{*}=\frac{1}{\beta_2 C^{*}},
\end{equation}
and for every $i\in \mathcal{S}$,  the polymerization equations yield the recursive relation
\begin{equation}\label{eq:core-rec}
x_i^{*}=\frac{a_i}{d+a_{i1}+a_{i2}}\,x_{i'}^{*}.
\end{equation}

\medskip

\noindent
Equation \eqref{eq:core-rec} shows that equilibrium concentrations propagate along the unique
predecessor chain in the binary polymerization system. Notice that the state space $\mathcal S$
consists of finite words over the alphabet $\{1,2\}$ (unactivated polymers), while the activated
precursors $1',2'$ are \emph{not} elements of $\mathcal S$ but act as fixed inputs for the
polymerization recursion. More precisely, let $i=i_1\cdots i_n\in\mathcal S$ with $n\ge 1$, and define again its prefixes
\[
i^{(r)}:=i_1\cdots i_r,\qquad r=1,\dots,n.
\]
By construction of the predecessor map, we have
\[
(i^{(r)})'=i^{(r-1)}\quad\text{for } r\ge 2,
\qquad\text{and}\qquad
(i^{(1)})'=i_1'\in\{1',2'\}.
\]
Thus the unique lineage feeding $i$ is
\[
i_1' \;\to\; i^{(1)} \;\to\; i^{(2)} \;\to\; \cdots \;\to\; i^{(n)}=i.
\]
Iterating \eqref{eq:core-rec} along this chain yields the explicit product formula
\begin{equation}\label{eq:core-product}
x_i^{*}
=
\left(\prod_{r=1}^{n}\frac{a_{i^{(r)}}}{d+a_{i^{(r)}1}+a_{i^{(r)}2}}\right)\,
x_{i_1'}^{*},
\end{equation}
where the ``boundary'' equilibrium values of the activated precursors are given by \eqref{eq:core-root}. Hence, once $C^*>0$ is known, all equilibrium concentrations are uniquely determined by \eqref{eq:core-root} and the lineage recursion \eqref{eq:core-product}. In particular, positivity of parameters implies $x_i^*>0$ for all $i$ whenever $C^*>0$.

\medskip
\noindent
To determine $C^*$, set $\dot C=0$ in \eqref{eq:5}:
\[
0=C^{*}\Bigl[r\Bigl(1-\frac{C^{*}}{K}\Bigr)-\beta_1 x_{1'}^{*}-\beta_2 x_{2'}^{*}\Bigr].
\]
A steady state with finite $x_{1'}^*,x_{2'}^*$ requires $C^*>0$, and substituting \eqref{eq:core-root} yields
\[
r\Bigl(1-\frac{C^{*}}{K}\Bigr)
=
\beta_1\Bigl(\frac{1}{\beta_1 C^*}\Bigr)
+\beta_2\Bigl(\frac{1}{\beta_2 C^*}\Bigr)
=\frac{2}{C^*}.
\]
Equivalently,
\begin{equation}\label{eq:core-quadratic}
\frac{r}{K}C^{*2}-rC^*+2=0,
\end{equation}
whose discriminant is $\Delta=r^2-\frac{8r}{K}=r\bigl(r-\frac{8}{K}\bigr)$.
Therefore, \eqref{eq:core-quadratic} admits positive real solutions if and only if
\begin{equation}\label{eq:core-feasible}
rK\ge 8.
\end{equation}
If $rK=8$, there is a unique positive root,
\[
C^*=\frac{K}{2}.
\]
If $rK>8$, there are two distinct positive roots
\begin{equation}\label{eq:core-Cpm}
C_\pm^*=\frac{K}{2}\left(1\pm\sqrt{1-\frac{8}{rK}}\right),
\qquad 0<C_-^*<\frac{K}{2}<C_+^*<K.
\end{equation}

\medskip
\noindent
For each admissible choice of $C^*$ (according to whether $rK=8$ or $rK>8$), 
\[
x_{1'}^{*}= \frac{1}{\beta_{1}C^{*}},
\qquad
x_{2'}^{*}= \frac{1}{\beta_{2}C^{*}}.
\]
Moreover, for any polymer $i=i_1\cdots i_n\in\mathcal S$ ($n\ge 1$), we can write
\[
x_i^{*}
=
\left(\prod_{r=1}^{n}\frac{a_{i^{(r)}}}{d+a_{i^{(r)}1}+a_{i^{(r)}2}}\right)\,
\frac{1}{\beta_{i_1}\,C^{*}},
\]
where we use the convention $\beta_{i_1}=\beta_1$ if $i_1=1$ and $\beta_{i_1}=\beta_2$ if $i_1=2$. In short, a strictly positive steady state for the entire system exists if and only if $rK\ge 8$. It is unique when $rK=8$, while for $rK>8$ there are two distinct positive steady states
corresponding to $C^*=C_-^*$ and $C^*=C_+^*$, with all polymer concentrations determined recursively along lineages.

\medskip
\noindent
Having characterized all admissible steady states of \eqref{eq:5}, and the feasibility condition $rK\ge 8$ for their existence, the next step is to verify that the dynamics is well-defined in the biologically meaningful region of the state space. Since the variables represent concentrations, solutions must remain nonnegative whenever they start so. This motivates establishing a forward-invariance property for the positive orthant. If initially $x_{1'}(0),x_{2'}(0),C(0)>0$ and $x_i(0)>0$ for all polymers, then no component can reach zero in finite time. The following lemma formalizes this positivity invariance.

\begin{lemma}[Positivity invariance]\label{lem:positivity}
Assume $\beta_1,\beta_2,r,K, d>0$, $a_i>0$, $a_{ij}\geq 0$ and $k_i:=d+a_{i1}+a_{i2}>0$.
Let $(x_{1'},x_{2'},(x_i)_{i\in \mathcal{S}},C)$ be any (local) solution of \eqref{eq:5}
defined on $[0,T)$ with strictly positive initial data
$x_{1'}(0),x_{2'}(0),C(0)>0$ and $x_i(0)>0$ for all $i\in\mathcal{S}$. Then all components remain strictly positive on $[0,T)$. Moreover, the nonnegative orthant is forward invariant.
\end{lemma}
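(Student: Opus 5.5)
The plan is to use the triangular structure of \eqref{eq:5}: the core variables $(x_{1'},x_{2'},C)$ form a closed subsystem, and each polymer equation is a scalar linear ODE driven by its predecessor. Positivity can therefore be propagated from the core down the lineage tree by induction on $L(i)$, exactly as in the proof of Theorem~\ref{thm:global-polymerization}.

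\emph{Step 1: the resource $C$.} Along the given solution the map $t\mapsto g(t):=r(1-C(t)/K)-\beta_1x_{1'}(t)-\beta_2x_{2'}(t)$ is continuous on $[0,T)$. The equation $\dot C=g(t)C$ is linear in $C$, so
\[
C(t)=C(0)\exp\Bigl(\int_0^t g(s)\,ds\Bigr)>0,\qquad t\in[0,T).
\]
This also shows $C\equiv0$ if $C(0)=0$, and $C\ge0$ whenever $C(0)\ge0$.

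\emph{Step 2: the precursors.} For $\lambda=1,2$, variation of constants gives
\[
x_{\lambda'}(t)=e^{-\beta_\lambda\int_0^tC}\,x_{\lambda'}(0)+\int_0^t e^{-\beta_\lambda\int_s^tC}\,ds .
\]
Both terms are nonnegative because $C$ is continuous, and the second is strictly positive for $t>0$. Hence $x_{\lambda'}(t)>0$ on $[0,T)$, and this holds even when $x_{\lambda'}(0)=0$.

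\emph{Step 3: the polymers.} Induct on length. For $L(i)=1$ the predecessor is $i_1'$, which is positive by Step 2. In general, once $x_{i'}$ is known to be positive on $[0,T)$,
\[
x_i(t)=e^{-k_it}x_i(0)+a_i\int_0^te^{-k_i(t-s)}x_{i'}(s)\,ds>0 .
\]
Since every $i\in\mathcal S$ has finite length, this covers all components. For the forward invariance of the nonnegative orthant, the same three formulas with nonnegative data give nonnegative values. No approximation by $\varepsilon$-perturbed data is needed, because every component has an explicit representation in which each term is nonnegative.

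\emph{Main obstacle.} I expect the only delicate point to be the meaning of ``solution'' for an infinite system. I would take it componentwise, as in Theorem~\ref{thm:global-polymerization}: each component is $C^1$ and satisfies its equation. With that reading, each scalar equation has a unique solution for its given continuous coefficients, so the representation formulas above are valid. The core subsystem is locally Lipschitz on $\mathbb R^3$, so its local solution is unique and $g$ is well defined.
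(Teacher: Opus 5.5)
Your proposal is correct and follows essentially the same route as the paper: the exponential representation of $C$, the integrating-factor formulas for $x_{1'},x_{2'}$, and induction along the lineage using variation of constants (your induction on length is the paper's induction on depth). Your explicit treatment of nonnegative data through the same formulas actually supplies the forward-invariance claim that the paper states but does not verify separately, with the small caveat that a root starting at $0$ is strictly positive only for $t>0$.
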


\begin{proof}
Write
\[
\dot C(t)=C(t)\,f(t),\qquad
f(t):=r\Bigl(1-\frac{C(t)}{K}\Bigr)-\beta_1 x_{1'}(t)-\beta_2 x_{2'}(t).
\]
Since $(x_{1'},x_{2'},C)$ is a (local) solution of the reduced subsystem, the functions $x_{1'},x_{2'},C$ are continuous on $[0,T)$,
hence so is $f$. Therefore the scalar linear ODE $\dot C=Cf$ admits the representation
\begin{equation}\label{eq:C-explicit}
C(t)=C(0)\exp\Bigl(\int_0^t f(s)\,ds\Bigr),\qquad t\in[0,T),
\end{equation}
which implies $C(t)>0$ for all $t\in[0,T)$ because $C(0)>0$ and the exponential is strictly positive.

\medskip

Since $C(t)> 0$ on $[0,T)$, each root variable solves an inhomogeneous linear ODE
\[
\dot x_{1'}(t)+\beta_1 C(t)x_{1'}(t)=1,\qquad
\dot x_{2'}(t)+\beta_2 C(t)x_{2'}(t)=1.
\]
Define the integrating factors
\[
\mu_1(t):=\exp\Bigl(\int_0^t \beta_1 C(u)\,du\Bigr),\qquad
\mu_2(t):=\exp\Bigl(\int_0^t \beta_2 C(u)\,du\Bigr),
\]
which satisfy $\mu_1(t),\mu_2(t)>0$ for all $t\in[0,T)$. Multiplying the equation for $x_{1'}$ by $\mu_1$
gives
\[
\frac{d}{dt}\bigl(\mu_1(t)x_{1'}(t)\bigr)=\mu_1(t),
\]
hence, integrating from $0$ to $t$,
\[
\mu_1(t)x_{1'}(t)=x_{1'}(0)+\int_0^t \mu_1(s)\,ds.
\]
Therefore
\begin{equation}\label{eq:x1-explicit}
x_{1'}(t)=\mu_1(t)^{-1}\left(x_{1'}(0)+\int_0^t \mu_1(s)\,ds\right)>0,
\qquad t\in[0,T),
\end{equation}
because $x_{1'}(0)>0$, $\mu_1(s)>0$, and $\mu_1(t)^{-1}>0$. The same computation yields
\begin{equation}\label{eq:x2-explicit}
x_{2'}(t)=\mu_2(t)^{-1}\left(x_{2'}(0)+\int_0^t \mu_2(s)\,ds\right)>0,
\qquad t\in[0,T).
\end{equation}

\medskip

Define the \emph{depth} $\mathrm{dep}(i)\in\mathbb N$ for $i\in\mathcal S\cup\{1',2'\}$ as the number of precursor steps
from $i$ to its (unique) ancestral root in $\{1',2'\}$. Thus $\mathrm{dep}(1')=\mathrm{dep}(2')=0$, and if $i\in\mathcal S$
then $\mathrm{dep}(i)=\mathrm{dep}(i')+1$.
We prove by induction on $n=\mathrm{dep}(i)$ that

\begin{equation}\label{eq:depth-claim}
x_i(t)>0 \quad \text{for all } t\in[0,T)\ \text{ and all } i \text{ with } \mathrm{dep}(i)=n.
\end{equation}
For $n=0$, the claim holds by \eqref{eq:x1-explicit}--\eqref{eq:x2-explicit}. Assume \eqref{eq:depth-claim} holds for all indices of depth $\le n$.
Let $i$ be any index with $\mathrm{dep}(i)=n+1$, so its precursor $i'$ has depth $n$ and therefore
$x_{i'}(t)>0$ for all $t\in[0,T)$ by the induction hypothesis. Consider the ODE for $x_i$:
\[
\dot x_i(t)+k_i x_i(t)=a_i x_{i'}(t),\qquad k_i>0,\ a_i>0.
\]
Since $x_{i'}(\cdot)$ is continuous on $[0,T)$, the function $t\mapsto a_i x_{i'}(t)$ is continuous.
Using the integrating factor $e^{k_i t}$, we obtain the formula of variation of constants
\begin{equation}\label{eq:xi-explicit}
x_i(t)=e^{-k_i t}x_i(0)+a_i\int_0^t e^{-k_i(t-s)}x_{i'}(s)\,ds,
\qquad t\in[0,T).
\end{equation}
Here $e^{-k_i t}>0$, $x_i(0)>0$, $a_i>0$, and $e^{-k_i(t-s)}x_{i'}(s)\ge 0$ for $0\le s\le t<T$ because
$x_{i'}(s)>0$ by the induction hypothesis. Hence the integral term in \eqref{eq:xi-explicit} is nonnegative, and
\[
x_i(t)\ge e^{-k_i t}x_i(0)>0\qquad \forall\,t\in[0,T).
\]
This proves \eqref{eq:depth-claim} for depth $n+1$. By induction, $x_i(t)>0$ for all $i\in\mathcal{S}$ and all
$t\in[0,T)$.
\end{proof}

\noindent
Lemma~\ref{lem:positivity} shows that the positive cone is forward invariant, so the solution cannot leave the
biologically meaningful region by hitting zero in finite time. The next issue is whether a solution that stays
positive might nevertheless \emph{blow up} in finite time by becoming unbounded. In system~\eqref{eq:5} the only
nonlinear feedback occurs through the logistic equation for the resource variable $C(t)$, which in turn modulates the decay of the activated roots $x_{1'}(t),x_{2'}(t)$ and therefore feeds the entire
polymerization tree. Hence, establishing an \emph{a priori} upper bound for $C(t)$ is the key step toward ruling out finite time blow up and guaranteeing global existence of the full componentwise solution. This is the content
of the next lemma.

\begin{lemma}[Global existence and an a priori upper bound for $C$]
\label{lem:global-existence}
Assume $\beta_1,\beta_2,r,K>0$, $a_i>0$, and $k_i>0$ for all $i\in\mathcal{S}$. For every strictly positive initial condition
\[
C(0)>0,\quad x_{1'}(0)>0,\quad x_{2'}(0)>0,\quad x_i(0)>0\ (i\in\mathcal{S}),
\]
system \eqref{eq:5} admits a unique global \emph{componentwise} solution on $[0,\infty)$.
Moreover,
\[
0<C(t)\le \max\{C(0),K\}\qquad\forall t\ge0.
\]
\end{lemma}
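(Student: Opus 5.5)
The plan is to exploit the triangular structure of \eqref{eq:5}. The three-dimensional core $(x_{1'},x_{2'},C)$ is a closed autonomous subsystem that does not depend on the polymer variables. Once the core is global, the polymer tree is a cascade of scalar linear ODEs, exactly as in Theorem~\ref{thm:global-polymerization}. So I will first obtain global existence and uniqueness for the core, and then propagate along the tree by induction on depth.

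\textbf{Step 1: local solution of the core.} The vector field of the core is polynomial, hence locally Lipschitz on $\mathbb R^3$. Picard--Lindelöf gives a unique maximal solution on $[0,T_{\max})$. By Lemma~\ref{lem:positivity}, all three components stay strictly positive there.

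\textbf{Step 2: upper bound for $C$.} Since $x_{1'},x_{2'}>0$, we have
\[
\dot C\le rC\Bigl(1-\frac{C}{K}\Bigr).
\]
I will argue by contradiction. Suppose $C(t_1)>M:=\max\{C(0),K\}$ for some $t_1$. Let $t_0$ be the last time before $t_1$ with $C(t_0)=M$. On $(t_0,t_1]$ we have $C>K$, so $\dot C<0$, and hence $C(t_1)<C(t_0)=M$, a contradiction. Alternatively, I can compare with the logistic solution. This yields $0<C(t)\le M$ on $[0,T_{\max})$.

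\textbf{Step 3: bounds for the roots and $C$ from below.}
\begin{itemize}
\item Roots from above. Since $\beta_\ell C x_{\ell'}\ge 0$, we get $\dot x_{\ell'}\le 1$, so $0<x_{\ell'}(t)\le x_{\ell'}(0)+t$.
\item $C$ from below. Formula \eqref{eq:C-explicit} gives $C(t)\ge C(0)\exp\bigl(-\int_0^t(rM/K+\beta_1x_{1'}+\beta_2x_{2'})\,ds\bigr)$, which is strictly positive for every finite $t$.
\end{itemize}
The core therefore stays in a compact subset of $\mathbb R^3$ on every bounded interval $[0,\min(T,T_{\max}))$. By the standard continuation criterion, $T_{\max}=\infty$.

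\textbf{Step 4: the polymer tree.} With $x_{1'},x_{2'}$ now known continuous functions on $[0,\infty)$, I repeat verbatim the recursive construction of Theorem~\ref{thm:global-polymerization}, inducting on depth $\mathrm{dep}(i)$. Each $x_i$ is the unique global solution of $\dot x_i=a_ix_{i'}-k_ix_i$, given by the variation-of-constants formula \eqref{eq:xi-explicit}. Consistency and uniqueness follow by the same induction, since each equation involves only its predecessor. Positivity is again Lemma~\ref{lem:positivity}.

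\textbf{Main obstacle.} The only delicate point is Step 2 together with the continuation argument. It has to be made rigorous without assuming a priori that the solution is global, which is why it is done on the maximal interval. The infinite-dimensional part poses no real difficulty: "global componentwise solution" only requires each coordinate to exist globally, and the triangular structure gives that directly, with no norm on $\mathbb R^{\mathcal S}$ needed.
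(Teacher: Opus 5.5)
Your proposal is correct and follows the same route as the paper. Both proofs take local existence for the closed core, positivity from Lemma~\ref{lem:positivity}, and an upper bound $C\le\max\{C(0),K\}$ (the paper uses the logistic comparison you list as an alternative; your last-crossing and mean-value argument is an equally valid direct proof). Both then use the bound $x_{\ell'}(t)\le x_{\ell'}(0)+t$, continuation, and the variation-of-constants recursion with induction on depth for existence and uniqueness on the tree. The only superfluous piece is your lower bound on $C$ in Step~3: since the core vector field is locally Lipschitz on all of $\mathbb R^3$, boundedness alone already rules out $T_{\max}<\infty$.
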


\begin{proof}
Consider the $3$--dimensional subsystem for $(C,x_{1'},x_{2'})$:
\[
\dot x_{1'} = 1-\beta_1 C x_{1'},\qquad
\dot x_{2'} = 1-\beta_2 C x_{2'},\qquad
\dot C = C\Bigl[r\Bigl(1-\frac{C}{K}\Bigr)-\beta_1 x_{1'}-\beta_2 x_{2'}\Bigr].
\]
Their right-hand side is locally Lipschitz on $\mathbb R^3$, hence there exists a unique maximal solution
\[
(C,x_{1'},x_{2'}) \in C^1([0,T_{\max});\mathbb R^3)
\]
for some $T_{\max}\in(0,\infty]$.  Set $M:=\max\{C(0),K\}$. We claim that $C(t)\le M$ for all $t\in[0,T_{\max})$. Since $x_{1'}(t),x_{2'}(t)\ge 0$ for all $t$ by Lemma~\ref{lem:positivity}, we have
\[
\dot C(t)
= C(t)\Bigl[r\Bigl(1-\frac{C(t)}{K}\Bigr)-\beta_1 x_{1'}(t)-\beta_2 x_{2'}(t)\Bigr]
\;\le\;
r\,C(t)\Bigl(1-\frac{C(t)}{K}\Bigr).
\]
Let $y$ be the solution of the logistic comparison equation
\[
\dot y(t)=r\,y(t)\Bigl(1-\frac{y(t)}{K}\Bigr),\qquad y(0)=C(0).
\]
Therefore, by the scalar comparison principle,
\[
0<C(t)\le y(t)\qquad \forall t\in[0,T_{\max}).
\]
On the other hand, the logistic solution $y$ with positive initial data remains positive and satisfies
\[
y(t)\le \max\{C(0),K\}\qquad \forall t\ge 0.
\]
Consequently,
\[
0<C(t)\le \max\{C(0),K\}\qquad \forall t\in[0,T_{\max}).
\]
\medskip

Since $C(t)\ge 0$, we have
\[
\dot x_{1'}(t)=1-\beta_1 C(t)x_{1'}(t)\le 1,\qquad
\dot x_{2'}(t)\le 1.
\]
Integrating on $[0,t]$ gives, for all $t\in[0,T_{\max})$,
\[
0<x_{1'}(t)\le x_{1'}(0)+t,\qquad 0<x_{2'}(t)\le x_{2'}(0)+t,
\]
so $x_{1'}$ and $x_{2'}$ remain finite on every bounded time interval.

\medskip

We have shown that $(C(t),x_{1'}(t),x_{2'}(t))$ stays bounded on $[0,T]$ for every $T<T_{\max}$.
By the standard continuation theorem for ODEs with locally Lipschitz right-hand side,
a maximal solution can fail to extend past $T_{\max}<\infty$ only if its norm blows up as $t\uparrow T_{\max}$.
Since no blow-up occurs, we must have $T_{\max}=\infty$.
Thus the core solution exists uniquely on $[0,\infty)$.

\medskip

Recall that the depth $\mathrm{dep} (i)\in\mathbb N$ is the number of predecessor steps from $i$ to its activated ancestral root in $\{1',2'\}$. Fix any $i\in \mathcal{S}$. Its ancestor chain to a root has finite length, hence once the reduced subsystem is known on $[0,\infty)$,
the precursor $x_{i'}(\cdot)$ is a known continuous function on $[0,\infty)$.
Then $x_i$ solves the linear ODE
\[
\dot x_i(t)+k_i x_i(t)=a_i x_{i'}(t),\qquad t\ge 0,
\]
and variation of constants yields the unique solution
\[
x_i(t)=e^{-k_i t}x_i(0)+a_i\int_0^t e^{-k_i(t-s)}x_{i'}(s)\,ds,\qquad t\ge0,
\]
which is $C^1$ on $[0,\infty)$. Doing this for each $i$ defines a componentwise solution
$(C,x_{1'},x_{2'},(x_i)_{i\in \mathcal{S}})$ on $[0,\infty)$.

\medskip

Let $(C,x_{1'},x_{2'},(x_i)_{i\in\mathcal{S}})$ and $(\widehat C,\widehat x_{1'},\widehat x_{2'},(\widehat x_i)_{i\in\mathcal{S}})$ be two solutions of \eqref{eq:5} on $[0,\infty)$ with the same initial data:
\[
C(0)=\widehat C(0),\quad x_{1'}(0)=\widehat x_{1'}(0),\quad x_{2'}(0)=\widehat x_{2'}(0),\quad
x_i(0)=\widehat x_i(0)\ \ \forall i\in\mathcal{S}.
\]
Based on what we said initially, we have
\[
(C(t),x_{1'}(t),x_{2'}(t))=(\widehat C(t),\widehat x_{1'}(t),\widehat x_{2'}(t))
\qquad \forall t\in[0,\infty).
\]

\medskip

We know that $\mathrm{dep}(1')=\mathrm{dep}(2')=0$, and for every $i\in\mathcal{S}$ we have
\[
\mathrm{dep}(i)=\mathrm{dep}(i')+1.
\]
We prove by induction on $n\in\mathbb N$ that
\begin{equation}\label{eq:ind-claim}
x_i(t)=\widehat x_i(t)\quad \forall t\in[0,\infty)\quad \text{for all indices } i \text{ with } \mathrm{dep}(i)\le n.
\end{equation}

\smallskip
\emph{Base case $n=0$.}
This is exactly above, i.e., the components with depth $0$ are $1',2'$ (and $C$, which already coincides),
so \eqref{eq:ind-claim} holds for $n=0$. \emph{Inductive step.} Assume \eqref{eq:ind-claim} holds for some $n\ge 0$.
Let $i$ be any index with $\mathrm{dep}(i)=n+1$, so its precursor $i'$ has depth $n$.
By the induction hypothesis,
\[
x_{i'}(t)=\widehat x_{i'}(t)\qquad \forall t\in[0,\infty).
\]
Now, for $i\in\mathcal{S}$, the component $x_i$ satisfies the linear inhomogeneous ODE
\[
\dot x_i(t)+k_i x_i(t)=a_i x_{i'}(t),\qquad t\in[0,\infty),
\]
and similarly
\[
\dot{\widehat x}_i(t)+k_i \widehat x_i(t)=a_i \widehat x_{i'}(t),\qquad t\in[0,\infty).
\]
Since $x_{i'}=\widehat x_{i'}$ on $[0,\infty)$, both $x_i$ and $\widehat x_i$ solve the \emph{same}
scalar linear ODE with the \emph{same forcing term} $t\mapsto a_i x_{i'}(t)$ and the same initial data
$x_i(0)=\widehat x_i(0)$. Scalar linear ODEs have unique solutions (e.g.\ by integrating factor),
hence
\[
x_i(t)=\widehat x_i(t)\qquad \forall t\in[0,\infty).
\]
Since $i$ was arbitrary among indices of depth $n+1$, this proves \eqref{eq:ind-claim} for $n+1$. Also, since every index $i\in\mathcal S$
has finite depth, it follows that $x_i(t)=\widehat x_i(t)$ for all $i$ and all $t\in[0,\infty)$. Hence the full family is unique.
\end{proof}


\subsection{Stability of the positive equilibria of the core subsystem}

We now analyze the local stability of the positive equilibria of the closed core subsystem
\begin{equation}\label{eq:core_subsystem_stability}
\begin{aligned}
\dot x_{1'}&=1-\beta_1 Cx_{1'},\\
\dot x_{2'}&=1-\beta_2 Cx_{2'},\\
\dot C&=
C\left[
r\left(1-\frac{C}{K}\right)
-\beta_1x_{1'}
-\beta_2x_{2'}
\right].
\end{aligned}
\end{equation}
At any positive equilibrium, the first two equations give
\[
x_{1'}^*=\frac{1}{\beta_1 C^*},
\qquad
x_{2'}^*=\frac{1}{\beta_2 C^*}.
\]
Substituting these expressions into the \(C\)-equation yields
\begin{equation}\label{eq:core_C_equilibrium}
r\left(1-\frac{C^*}{K}\right)
=
\frac{2}{C^*}.
\end{equation}
Equivalently,
\begin{equation}\label{eq:core_C_quadratic}
rC^*\left(1-\frac{C^*}{K}\right)-2=0.
\end{equation}
Thus positive equilibria exist if and only if \(rK\ge 8\). If \(rK=8\), there is a unique positive equilibrium level
\[
C^*=\frac K2.
\]
If \(rK>8\), there are two distinct positive equilibrium levels
\begin{equation}\label{eq:core_Cpm_stability}
C_\pm^*
=
\frac K2
\left(
1\pm\sqrt{1-\frac{8}{rK}}
\right),
\qquad
0<C_-^*<\frac K2<C_+^*<K.
\end{equation}

\medskip

We now classify the local stability of these equilibria for the full core subsystem \eqref{eq:core_subsystem_stability}. Let
\[
E_c^*=(x_{1'}^*,x_{2'}^*,C^*)
\]
be a positive core equilibrium. The Jacobian matrix of \eqref{eq:core_subsystem_stability}, written in the order
\((x_{1'},x_{2'},C)\), is
\[
J(C^*)
=
\begin{pmatrix}
-\beta_1 C^* & 0 & -\beta_1x_{1'}^*\\
0 & -\beta_2 C^* & -\beta_2x_{2'}^*\\
-\beta_1 C^* & -\beta_2 C^* &
r\left(1-\frac{C^*}{K}\right)
-\beta_1x_{1'}^*
-\beta_2x_{2'}^*
-\frac{r}{K}C^*
\end{pmatrix}.
\]
Using the equilibrium identities
\[
\beta_1x_{1'}^*=\frac1{C^*},
\qquad
\beta_2x_{2'}^*=\frac1{C^*},
\]
and
\[
r\left(1-\frac{C^*}{K}\right)
-\beta_1x_{1'}^*
-\beta_2x_{2'}^*
=0,
\]
this becomes
\begin{equation}\label{eq:core_jacobian}
J(C^*)
=
\begin{pmatrix}
-\beta_1 C^* & 0 & -\frac1{C^*}\\
0 & -\beta_2 C^* & -\frac1{C^*}\\
-\beta_1 C^* & -\beta_2 C^* & -\frac{r}{K}C^*
\end{pmatrix}.
\end{equation}

Set
\[
p:=\beta_1 C^*,\qquad
q:=\beta_2 C^*,\qquad
\alpha:=\frac{r}{K}C^*.
\]
Then \(p,q,\alpha>0\), and the characteristic polynomial of \(J(C^*)\) is
\[
P(\lambda)
=
\det(\lambda I-J(C^*))
=
\lambda^3+A_1\lambda^2+A_2\lambda+A_3,
\]
where
\[
A_1=p+q+\alpha,
\]
\[
A_2=pq+(p+q)\left(\alpha-\frac1{C^*}\right),
\]
and
\[
A_3=pq\left(\alpha-\frac{2}{C^*}\right).
\]
Using the equilibrium identity \eqref{eq:core_C_equilibrium}, we get
\[
\frac{2}{C^*}=r\left(1-\frac{C^*}{K}\right),
\]
and therefore
\[
\alpha-\frac{2}{C^*}
=
\frac{r}{K}C^*
-
r\left(1-\frac{C^*}{K}\right)
=
\frac{r}{K}(2C^*-K).
\]
Hence
\begin{equation}\label{eq:A3_sign}
A_3
=
pq\,\frac{r}{K}(2C^*-K).
\end{equation}

If \(C^*=C_-^*\), then \(C_-^*<K/2\), and therefore \(A_3<0\). Since \(P(0)=A_3<0\) while \(P(\lambda)\to+\infty\) as \(\lambda\to+\infty\), the characteristic polynomial has at least one positive real root. Thus the lower equilibrium branch \(C_-^*\) is unstable for the core subsystem.

Now suppose \(C^*=C_+^*\). Then \(C_+^*>K/2\), so \(A_3>0\). Moreover,
\[
\alpha-\frac1{C^*}
=
\frac{r}{K}C^*-\frac1{C^*}.
\]
Using again \eqref{eq:core_C_equilibrium}, we may write
\[
\frac1{C^*}
=
\frac r2\left(1-\frac{C^*}{K}\right),
\]
and therefore
\[
\alpha-\frac1{C^*}
=
\frac{r}{K}C^*
-
\frac r2\left(1-\frac{C^*}{K}\right)
=
\frac{r}{2K}(3C^*-K).
\]
Since \(C_+^*>K/2\), we have \(3C_+^*-K>0\). Hence
\[
A_2>0.
\]
Clearly \(A_1>0\). It remains to verify the Routh--Hurwitz condition
\[
A_1A_2>A_3.
\]
Indeed,
\[
A_1A_2-A_3
=
(p+q+\alpha)
\left[
pq+(p+q)\left(\alpha-\frac1{C^*}\right)
\right]
-
pq\left(\alpha-\frac2{C^*}\right).
\]
Expanding and simplifying gives
\[
A_1A_2-A_3
=
(p+q)pq
+
\frac{2pq}{C^*}
+
(p+q)(p+q+\alpha)
\left(\alpha-\frac1{C^*}\right).
\]
For \(C^*=C_+^*>K/2\), all three terms on the right-hand side are strictly positive. Therefore
\[
A_1A_2>A_3.
\]
By the Routh--Hurwitz criterion for cubic polynomials, all eigenvalues of \(J(C_+^*)\) have strictly negative real parts. Hence the upper equilibrium branch \(C_+^*\) is locally asymptotically stable for the full core subsystem.

In the threshold case \(rK=8\), the only positive equilibrium has \(C^*=K/2\). In this case \(A_3=0\), so the Jacobian has a zero eigenvalue. Hence the equilibrium is non-hyperbolic, and linearization alone does not determine its stability.

\medskip

Consequently, for the closed three-dimensional core subsystem, the lower positive branch is unstable and the upper positive branch is locally asymptotically stable whenever \(rK>8\). Since the core subsystem is closed inside the full polymerization--competition model, any Lyapunov-stable equilibrium of the full infinite system must project onto a Lyapunov-stable equilibrium of the core. Therefore, the full equilibrium associated with \(C_-^*\) cannot be Lyapunov stable, while the branch associated with \(C_+^*\) is the only positive branch that can serve as a locally stable candidate for the full system.

\begin{lemma}[Core stability is necessary for full-system stability]
\label{lem:reduced_vs_full}
Let 
\[
E^*=\bigl(C^*,x_{1'}^*,x_{2'}^*,(x_i^*)_{i\in\mathcal S}\bigr)
\]
be a strictly positive equilibrium of the full system \eqref{eq:5}. Consider the $\ell^2$-type norm (whenever finite)
\[
\|(C,x)-(C^*,x^*)\|^2
:= |C-C^*|^2 + |x_{1'}-x_{1'}^*|^2 + |x_{2'}-x_{2'}^*|^2
   + \sum_{i\in\mathcal S} |x_i-x_i^*|^2,
\]
where $x=\bigl(C,x_{1'},x_{2'},(x_i)_{i\in\mathcal S}\bigr)$. If $E^*$ is Lyapunov stable for the full system \eqref{eq:5} with respect to this topology, then the core equilibrium $(C^*,x_{1'}^*,x_{2'}^*)$ is Lyapunov stable for the
closed core subsystem
\begin{equation}\label{eq:core_subsystem}
\dot x_{1'}=1-\beta_1 Cx_{1'},\qquad
\dot x_{2'}=1-\beta_2 Cx_{2'},\qquad
\dot C=C\Bigl[r\Bigl(1-\frac{C}{K}\Bigr)-\beta_1 x_{1'}-\beta_2 x_{2'}\Bigr].
\end{equation}
In particular, if $(C^*,x_{1'}^*,x_{2'}^*)$ is unstable for \eqref{eq:core_subsystem}, then $E^*$ cannot be
Lyapunov stable for the full system \eqref{eq:5}.
\end{lemma}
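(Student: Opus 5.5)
The plan is a lifting argument. Any perturbation of the core equilibrium can be embedded into a perturbation of the full equilibrium $E^*$ that leaves every polymer coordinate at its equilibrium value. Because the core subsystem \eqref{eq:core_subsystem} is closed, meaning its right-hand side does not involve $(x_i)_{i\in\mathcal S}$, the core part of the full trajectory is exactly the core trajectory. The norm in the statement dominates the core distance, so full-system stability transfers to the core.

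\emph{Setup.} Fix $\varepsilon>0$. I would first shrink $\varepsilon$ so that $\varepsilon<\min\{C^*,x_{1'}^*,x_{2'}^*\}$; this is possible because $E^*$ is strictly positive. By Lyapunov stability of $E^*$ in the given topology, there is $\delta\in(0,\varepsilon]$ with the following property: every full solution whose initial distance to $E^*$ is less than $\delta$ stays within distance $\varepsilon$ of $E^*$ for all $t\ge0$.

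\emph{Lifting the initial data.} Take core data $(C_0,x_{1',0},x_{2',0})$ with Euclidean distance less than $\delta$ from $(C^*,x_{1'}^*,x_{2'}^*)$. By the choice of $\varepsilon$, these data are strictly positive. Define full initial data by adjoining $x_i(0):=x_i^*>0$ for every $i\in\mathcal S$. The resulting initial condition is strictly positive. Its full norm distance to $E^*$ equals the core distance, since the sum over $\mathcal S$ vanishes; in particular it is finite and less than $\delta$.

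\emph{The two trajectories agree on the core.} By Lemma~\ref{lem:global-existence}, the full system admits a unique global componentwise solution from these data. Its $(C,x_{1'},x_{2'})$-components form, by construction in that proof, the unique maximal solution of the closed subsystem \eqref{eq:core_subsystem} with the chosen core data. Hence the core trajectory coincides with the core projection of the full trajectory.

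\emph{Transferring the bound.} Full stability gives $\|(C,x)(t)-E^*\|<\varepsilon$ for all $t\ge0$. The core Euclidean distance is bounded by this full norm, because it omits only nonnegative summands. Therefore the core trajectory remains within $\varepsilon$ of $(C^*,x_{1'}^*,x_{2'}^*)$, which is Lyapunov stability of the core equilibrium. The final assertion of the lemma is the contrapositive, applied in particular to $C_-^*$ using the instability established above.

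\emph{Expected obstacle.} The delicate point is making the two solution notions match. The full system is posed componentwise, and its stability is measured in a norm that may be infinite for general data. One must check three things. First, the lifted data lie in the admissible (strictly positive) class of Lemma~\ref{lem:global-existence}; the lift handles this. Second, the uniqueness in that lemma guarantees that the full solution whose stability is assumed is the one built from the core solution. Third, the full stability hypothesis is applied only to data at finite distance from $E^*$; the lift also ensures this, since it makes the polymer contribution to the norm identically zero at $t=0$. No control of the polymer tail at later times is needed, because the conclusion only involves the core coordinates.
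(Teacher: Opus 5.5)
Your proposal is correct and follows essentially the same route as the paper: you lift core perturbations by setting $x_i(0)=x_i^*$ for all $i\in\mathcal S$, use that the core equations are closed (together with the uniqueness from Lemma~\ref{lem:global-existence}) so that the core part of the full trajectory is the core trajectory, and then observe that the full norm dominates the core distance. The one difference is that your choice $\delta\le\varepsilon<\min\{C^*,x_{1'}^*,x_{2'}^*\}$ explicitly guarantees that every lifted initial condition is strictly positive, a step the paper only covers by saying it restricts to strictly positive initial data.
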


\begin{proof}
Inspecting \eqref{eq:5}, the right-hand sides of the equations for $(C,x_{1'},x_{2'})$ depend only on
$(C,x_{1'},x_{2'})$ and not on the remaining variables $(x_i)_{i\in\mathcal S}$. Hence any solution of the full
system projects onto a solution of the closed core subsystem \eqref{eq:core_subsystem}. Assume $E^*$ is Lyapunov stable for the full system with respect to $\|\cdot\|$. Fix $\varepsilon>0$. Then there exists $\delta>0$ such that for every initial condition
\[
(C(0),x_{1'}(0),x_{2'}(0),(x_i(0))_{i\in\mathcal S})
\quad\text{with}\quad
\|(C(0),x(0))-(C^*,x^*)\|<\delta,
\]
the corresponding solution satisfies
\[
\|(C(t),x(t))-(C^*,x^*)\|<\varepsilon
\qquad\text{for all }t\ge 0.
\]
Moreover, since $E^*$ is strictly positive, we restrict to strictly positive initial data. Then Lemma~\ref{lem:global-existence} guarantees that the resulting solution of \eqref{eq:5} exists globally on
$[0,\infty)$, so the above estimate is meaningful for every $t\ge 0$\\

Now restrict to perturbations affecting only the core variables: choose initial data such that
\[
x_i(0)=x_i^*\quad\text{for all }i\in\mathcal S,
\]
and $(C(0),x_{1'}(0),x_{2'}(0))$ is $\delta$--close to $(C^*,x_{1'}^*,x_{2'}^*)$ in $\mathbb R^3$.
Then
\[
\|(C(0),x(0))-(C^*,x^*)\|^2
=|C(0)-C^*|^2+|x_{1'}(0)-x_{1'}^*|^2+|x_{2'}(0)-x_{2'}^*|^2.
\]
By the full-system stability hypothesis, for all $t\ge 0$ we have
\[
|C(t)-C^*|^2+|x_{1'}(t)-x_{1'}^*|^2+|x_{2'}(t)-x_{2'}^*|^2
\le \|(C(t),x(t))-(C^*,x^*)\|^2<\varepsilon^2.
\]
Therefore the projected core trajectory $(C(t),x_{1'}(t),x_{2'}(t))$ remains $\varepsilon$--close to
$(C^*,x_{1'}^*,x_{2'}^*)$ for all $t\ge 0$, and since it solves \eqref{eq:core_subsystem}, the core equilibrium is
Lyapunov stable for \eqref{eq:core_subsystem}.
\end{proof}

\begin{remark}[Extension to the full system requires additional dissipativity]\label{rem:extension_needs_lyapunov}
The reduced classification does \emph{not} by itself prove global asymptotic stability of the full infinite system \eqref{eq:5}. To obtain global asymptotic stability (and exponential decay in an $\ell^2$-framework),
one standard approach combines:
\begin{itemize}
\item invariance/positivity and global existence of solutions,
\item a uniform lower bound $C(t)\ge \underline C>0$ and an a priori upper bound $C(t)\le \overline C$,
\item a coercive Lyapunov functional $V$ controlling the $\ell^2$--distance to equilibrium,
\item dominance/Young inequalities ensuring $\dot V\le -\beta V$ for some $\beta>0$.
\end{itemize}

\end{remark}

\noindent

Motivated by Remark~\ref{rem:extension_needs_lyapunov}, we now verify one of the structural ingredients needed for a global stability result. Beyond positivity and global existence, a key missing ingredient is a \emph{uniform persistence} estimate for the competitor variable \(C\). Namely, we need to prevent \(C(t)\) from approaching \(0\), since such a degeneration would typically destroy coercivity and uniform dissipation in the \(\ell^2\)-estimates.

The next lemma provides precisely this type of forward-invariance statement. Under an explicit balance condition between the logistic growth term and suitable upper bounds for the root concentrations, any initial lower bound of the form \(C(0)\ge \underline C\) is propagated for as long as the core trajectory exists.
\begin{lemma}[Forward invariance of a positive lower bound for the competitor]
\label{lem:C-lower-invariant}
Let $(C,x_{1'},x_{2'})$ be a (local) $C^{1}$ solution of the core subsystem \eqref{eq:core_subsystem}
on $[0,T)$, with parameters $\beta_1,\beta_2,r,K>0$ and strictly positive initial data.
Fix $\underline C\in(0,K)$ and assume

\begin{equation}\label{eq:H0}
C(0)\ge \underline C,
\qquad
r\Bigl(1-\frac{\underline C}{K}\Bigr)\ \ge\ \beta_1 M_1+\beta_2 M_2,
\end{equation}
where
\[
M_k:=\max\Bigl\{x_{k'}(0),\frac{1}{\beta_k\underline C}\Bigr\},\qquad k=1,2.
\]
Then
\[
C(t)\ge \underline C \qquad \forall\,t\in[0,T).
\]
\end{lemma}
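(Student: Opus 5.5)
The plan is to prove more than the statement: a whole box in the core phase space is forward invariant, not just the half-space $\{C\ge\underline C\}$. Concretely, I will work with the closed convex set
\[
\mathcal K:=\bigl\{(x_{1'},x_{2'},C)\in\mathbb R^3:\ 0\le x_{1'}\le M_1,\ 0\le x_{2'}\le M_2,\ C\ge \underline C\bigr\},
\]
which contains the initial point by \eqref{eq:H0} and the definition of $M_k$. Lemma~\ref{lem:positivity} already gives $x_{k'}(t)>0$ on $[0,T)$, so the faces $x_{k'}=0$ need no attention. The two conditions the trajectory has to respect are coupled, which is why the box is the right object:
\begin{itemize}
\item the lower bound on $C$ is what controls the roots;
\item the upper bounds on the roots are what keep $C$ from decreasing.
\end{itemize}

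\emph{Step 1 (the vector field is subtangential on $\partial\mathcal K$).}
\begin{itemize}
\item On the face $x_{k'}=M_k$ with $C\ge\underline C$, we have $\dot x_{k'}=1-\beta_k C M_k\le 1-\beta_k\underline C\,M_k\le 0$, because $M_k\ge 1/(\beta_k\underline C)$.
\item On the face $C=\underline C$ with $x_{k'}\le M_k$, we have $\dot C=\underline C\bigl[r(1-\underline C/K)-\beta_1x_{1'}-\beta_2x_{2'}\bigr]\ge \underline C\bigl[r(1-\underline C/K)-\beta_1M_1-\beta_2M_2\bigr]\ge 0$ by \eqref{eq:H0}.
\end{itemize}
Each face inequality uses only the defining constraints of $\mathcal K$. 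Hence at edges and corners the field also lies in the tangent cone of the polyhedron, which is the intersection of the half-space tangent cones of the active faces.

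\emph{Step 2 (invariance).} The right-hand side of \eqref{eq:core_subsystem} is locally Lipschitz, so solutions are unique. I will invoke Nagumo's theorem in its Lipschitz (Brezis--Bony) form: for a closed set and a locally Lipschitz field, the subtangential condition of Step 1 is sufficient for forward invariance. This gives $(x_{1'},x_{2'},C)(t)\in\mathcal K$ for all $t\in[0,T)$, and in particular $C(t)\ge\underline C$, which is the claim.

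\emph{Main obstacle.} All the inequalities in Step 1 are non-strict, and \eqref{eq:H0} may hold with equality. A naive ``first exit time'' argument therefore does not immediately produce a contradiction: at the exit time one only gets $\dot C\ge 0$, not $\dot C>0$. Citing Nagumo's theorem for Lipschitz fields resolves this. If a self-contained argument is preferred, I would proceed as follows.
\begin{itemize}
\item Measure the distance to the box by $\phi(t):=(\underline C-C)_+ +\sum_k(x_{k'}-M_k)_+$.
\item On a short interval after the last time $t_0$ with $\phi(t_0)=0$, trajectories remain in a compact neighbourhood, so the field is Lipschitz there with some constant $L$.
\item Compare the field at the current point with its value at the nearest point of $\mathcal K$. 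The boundary inequalities hold at that nearest point, so the Lipschitz bound gives $D^+\phi\le L'\phi$ for some constant $L'$.
\item Since $\phi(t_0)=0$, Grönwall's inequality for Dini derivatives, as used in the proof of Theorem~\ref{thm:linfty-stability}, forces $\phi\equiv0$.
\end{itemize}
This componentwise Dini estimate, with the cross term $\beta_kC x_{k'}$ absorbed into the Lipschitz constant, is the only delicate calculation.
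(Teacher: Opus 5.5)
Your proof is correct. It uses the same box as the paper (the paper's $\Omega$ is your $\mathcal K$ with the coordinates reordered) and the same two face estimates, $\dot x_{k'}\le 1-\beta_k\underline C M_k\le 0$ on $\{x_{k'}=M_k\}$ and $\dot C\ge \underline C\,[r(1-\underline C/K)-\beta_1M_1-\beta_2M_2]\ge 0$ on $\{C=\underline C\}$. The genuine difference is the invariance step.

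The paper takes a first exit time $t_*$, shows $y(t_*)\in\partial\Omega$, derives these non-strict sign conditions at $t_*$, and concludes that the trajectory ``cannot cross.'' As you anticipate under ``main obstacle,'' this inference fails when the relevant derivative vanishes at $t_*$, which \eqref{eq:H0} allows. The paper's assertion $0<t_*$ is also unjustified whenever $y(0)\in\partial\Omega$. This happens, for instance, when $C(0)=\underline C$, or when $x_{k'}(0)\ge 1/(\beta_k\underline C)$, since then $M_k=x_{k'}(0)$ by definition.

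Your route closes exactly this gap. Both the Nagumo/Bony--Brezis theorem for locally Lipschitz fields and your self-contained estimate $D^+\phi\le L'\phi$ rely on the Lipschitz continuity of the polynomial field. The self-contained version compares the field at the current point with its value at the coordinatewise projection onto $\mathcal K$, where the face inequalities hold. This Lipschitz ingredient is what handles the tangential case. The paper's argument is shorter, but it is only complete when the face inequalities are strict. Alternatively, it could be repaired by perturbing the field to $f+\varepsilon(-1,-1,+1)$ and passing to the limit via continuous dependence.

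One small correction. In a Nagumo-type theorem the tangency condition is required on all of $\partial\mathcal K$, not only at points the trajectory visits. So you cannot dismiss the faces $x_{k'}=0$ by citing Lemma~\ref{lem:positivity}. The condition holds there trivially, since $\dot x_{k'}=1>0$. Alternatively, drop the constraints $x_{k'}\ge 0$ from $\mathcal K$ altogether, since none of your face estimates uses them.
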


\begin{proof}
Consider the closed set
\[
\Omega:=\bigl\{(x,y,z)\in\mathbb R^3:\ x\ge \underline C,\ 0\le y\le M_1,\ 0\le z\le M_2\bigr\}.
\]
By construction we have $(C(0),x_{1'}(0),x_{2'}(0))\in\Omega$.
Assume for contradiction that the trajectory leaves $\Omega$.
Define the "first exit time"
\[
A:=\{t\in(0,T):\, y(t)\notin\Omega\}\neq\varnothing,
\qquad
t_*:=\inf A.
\]

\medskip
Since $(C,x_{1'},x_{2'})$ is a $C^{1}$ solution on $[0,T)$, the trajectory
\[
y(t):=(C(t),x_{1'}(t),x_{2'}(t))
\]
is continuous on $[0,T)$. Observe that $y(0)\in\Omega$. Then $0<t_*<T$. Moreover, for every $t\in[0,t_*)$ we must have $y(t)\in\Omega$; otherwise, if
$y(t_0)\notin\Omega$ for some $t_0<t_*$, then $t_0\in A$ and hence
$t_*=\inf A\le t_0<t_*$, a contradiction. Thus $y(t)\in\Omega$ for all $t<t_*$.\\

To locate $y(t_*)$, let $t_n\uparrow t_*$ with $t_n<t_*$. Then $y(t_n)\in\Omega$ for all $n$,
and by continuity $y(t_n)\to y(t_*)$. Since $\Omega$ is closed, it follows that $y(t_*)\in\Omega$. We observe that $y(t_*)\notin \operatorname{int}(\Omega)$. Indeed, if $y(t_*)\in \operatorname{int}(\Omega)$,
there exists $\varepsilon>0$ such that $B_\varepsilon(y(t_*))\subset\Omega$. By continuity,
there exists $\delta>0$ such that $|t-t_*|<\delta$ implies $y(t)\in B_\varepsilon(y(t_*))\subset\Omega$,
so in particular $y(t)\in\Omega$ for all $t\in(t_*,t_*+\delta)$, contradicting the definition of
$t_*=\inf A$. Therefore $y(t_*)\in \partial\Omega$.\\

Since $y(t_*)=(C(t_*),x_{1'}(t_*),x_{2'}(t_*))\in\partial\Omega$, at least one of the defining
inequalities of $\Omega$ is saturated at $t_*$. Hence,
\[
C(t_*)=\underline C,\quad \text{or}\quad x_{1'}(t_*)\in\{0,M_1\},\quad \text{or}\quad x_{2'}(t_*)\in\{0,M_2\}.
\]
By Lemma~\ref{lem:positivity}, $x_{1'}(t),x_{2'}(t)>0$ for all $t\in[0,T)$. Therefore, at the first exit time we must have
\[
C(t_*)=\underline C,\qquad \text{or}\qquad x_{1'}(t_*)=M_1,\qquad \text{or}\qquad x_{2'}(t_*)=M_2.
\]

\smallskip
\noindent\emph{Case 1: $x_{k'}(t_*)=M_k$ for some $k\in\{1,2\}$.}
Since $C(t)\ge \underline C$ on $[0,t_*]$, at $t=t_*$ we have
\[
\dot x_{k'}(t_*)=1-\beta_k C(t_*)x_{k'}(t_*)
\le 1-\beta_k \underline C\,M_k \le 0,
\]
because $M_k\ge 1/(\beta_k\underline C)$ by definition.
Thus $x_{k'}$ cannot cross \emph{upwards} through the level $M_k$ at $t_*$.

\smallskip
\noindent\emph{Case 2: $C(t_*)=\underline C$.}
Since the trajectory stays in $\Omega$ up to $t_*$, we also have
$x_{1'}(t_*)\le M_1$ and $x_{2'}(t_*)\le M_2$. Using the $C$--equation,
\[
\dot C(t_*)
= C(t_*)\Bigl[r\Bigl(1-\frac{C(t_*)}{K}\Bigr)-\beta_1 x_{1'}(t_*)-\beta_2 x_{2'}(t_*)\Bigr]
\ge \underline C\Bigl[r\Bigl(1-\frac{\underline C}{K}\Bigr)-\beta_1 M_1-\beta_2 M_2\Bigr]\ge 0
\]
by \eqref{eq:H0}. Hence $C$ cannot cross \emph{downwards} through the level $\underline C$ at $t_*$.\\

Once $y(t)$ reaches the boundary $\partial\Omega$ at $t_*$, in any case above, $y(t)$ cannot cross it, contradicting that $t_*$ is the first exit time from $\Omega$. Therefore the trajectory never leaves $\Omega$, and in particular
$C(t)\ge \underline C$ for all $t\in[0,T)$.
\end{proof}

Lemma~\ref{lem:C-lower-invariant} provides the key \emph{forward-invariant positivity buffer} needed in the global Lyapunov analysis. Under \eqref{eq:H0} the competitor level cannot fall below a fixed threshold $\underline C$ along the core dynamics. Together with the a priori upper bound $C(t)\le \overline C=\max\{C(0),K\}$ from the logistic comparison, this yields 
\[
\underline C\le C(t)\le \overline C \qquad \forall t\ge 0,
\]
which prevent degeneracy in the dissipation estimates and allow all coupling terms involving $C(t)$ to be controlled by constants depending only on $\underline C$ and $\overline C$. With these bounds in hand, we are able to impose the dominance conditions (35)--(36) and construct an $\ell^2$--Lyapunov functional whose derivative is strictly negative, leading to global asymptotic stability with exponential convergence for the full infinite system, as stated next. Before the next Theorem, we define

\[
\|\tilde x(t)\|_{\ell^2}^2+|\tilde C(t)|^2
:=
\bigl(x_{1'}(t)-x_{1'}^*\bigr)^2+\bigl(x_{2'}(t)-x_{2'}^*\bigr)^2
+\sum_{i\in\mathcal{S}}\bigl(x_i(t)-x_i^*\bigr)^2
+\bigl(C(t)-C^*\bigr)^2
<\infty.
\]

\begin{theorem}[Exponential stability in \(\ell^2\) under persistence and dominance conditions]
\label{thm:core-GAS}
Consider the infinite system \eqref{eq:5} with two kinds of active monomers, and assume
\[
\beta_1,\beta_2,r,K>0,\qquad a_i>0,\qquad 
k_i:=d+a_{i1}+a_{i2}>0\quad (i\in\mathcal{S}).
\]
Assume $rK> 8$ and fix the equilibrium competitor level $C^*>0$ as
\[
C^*:=\frac{K}{2}\Bigl(1+\sqrt{1-\frac{8}{rK}}\Bigr),
\qquad \frac K2<C^*<K.
\]

Define the associated positive equilibrium coordinates by
\[
x_{1'}^*=\frac{1}{\beta_1 C^*},\qquad
x_{2'}^*=\frac{1}{\beta_2 C^*},\qquad
x_i^*=\frac{a_i}{k_i}\,x_{i'}^*\quad (i\in\mathcal{S}).
\]

Define

\begin{equation}\label{eq:H1}
a_{\max}:=\sup_{i\in\mathcal{S}} a_i<\infty,
\qquad
k_{\min}:=\inf_{i\in\mathcal{S}} k_i > \frac{3}{2}a_{\max}.
\end{equation}

Let $(x_{1'},x_{2'},(x_i)_{i\in\mathcal{S}},C)$ be the unique global solution on $[0,\infty)$
issued from strictly positive initial data (cf.\ Lemma~\ref{lem:global-existence}).
Assume moreover that $H0$ of Lemma~\ref{lem:C-lower-invariant} holds for this trajectory. Assume the \emph{equilibrium dominance} inequalities

\begin{equation}\label{eq:H2a}
2\beta_1 \underline C>a_{\max},\qquad 2\beta_2 \underline C>a_{\max},    
\end{equation}

and
\begin{equation}\label{eq:H2b}
\frac{\bigl[\beta_1(\overline C+x_{1'}^*)\bigr]^2}{2\beta_1 \underline C-a_{\max}}
+
\frac{\bigl[\beta_2(\overline C+x_{2'}^*)\bigr]^2}{2\beta_2 \underline C-a_{\max}}
\;<\;
\frac{2r}{K}\,\underline C,    
\end{equation}
where
\[
\overline C:=\max\{C(0),K\}.
\]

Assume the initial deviation is square-summable:
\[
\|\tilde x(0)\|_{\ell^2}^2+|\tilde C(0)|^2
=
\bigl(x_{1'}(0)-x_{1'}^*\bigr)^2+\bigl(x_{2'}(0)-x_{2'}^*\bigr)^2
+\sum_{i\in\mathcal{S}}\bigl(x_i(0)-x_i^*\bigr)^2
+\bigl(C(0)-C^*\bigr)^2
<\infty.
\]
Then the equilibrium $(C^*,x_{1'}^*,x_{2'}^*,(x_i^*)_{i\in\mathcal{S}})$ is globally asymptotically stable in the sense that
\[
\|\tilde x(t)\|_{\ell^2}^2+|\tilde C(t)|^2 \to 0
\qquad (t\to\infty),
\]
and in fact there exists $\beta>0$ (depending only on the constants in (33)--(36)) such that
\[
\|\tilde x(t)\|_{\ell^2}^2+|\tilde C(t)|^2
\;\le\;
\bigl(\|\tilde x(0)\|_{\ell^2}^2+|\tilde C(0)|^2\bigr)e^{-\beta t},
\qquad t\ge 0.
\]
\end{theorem}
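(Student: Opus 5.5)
We use a quadratic Lyapunov functional in the deviation variables and prove a differential inequality $\dot V\le-\beta V$. Write $u_k:=x_{k'}-x_{k'}^*$ for $k=1,2$, $e_i:=x_i-x_i^*$ for $i\in\mathcal S$, and $w:=C-C^*$. Take
\[
V(t):=u_1^2+u_2^2+\sum_{i\in\mathcal S}e_i^2+w^2 .
\]
By Lemma~\ref{lem:global-existence} the solution is global, and $C(t)\le\overline C$. By Lemma~\ref{lem:C-lower-invariant} under $H0$, $C(t)\ge\underline C$. We use the equilibrium identities $1=\beta_k C^*x_{k'}^*$, $a_ix_{i'}^*=k_ix_i^*$ and $r(1-C^*/K)=\beta_1x_{1'}^*+\beta_2x_{2'}^*$.

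The error equations are as follows. For the roots,
\[
\dot u_k=-\beta_k(Cx_{k'}-C^*x_{k'}^*)=-\beta_k C\,u_k-\beta_k x_{k'}^*\,w .
\]
For the resource, factoring at the equilibrium gives
\[
\dot w=C\bigl[-\tfrac rK w-\beta_1u_1-\beta_2u_2\bigr].
\]
For the polymers, $\dot e_i=a_ie_{i'}-k_ie_i$, where $e_{i'}=u_{i_1}$ when $L(i)=1$.

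We differentiate $V$ termwise, which requires justification in $\ell^2$. We do this by working first with the truncation $V_n$ over $\mathcal S_{\le n}$ and then letting $n\to\infty$ by monotone convergence. This is legitimate because each $V_n$ satisfies the same inequality with constants independent of $n$.

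The polymer block is handled by Young's inequality. We have $2a_ie_ie_{i'}\le a_i(e_i^2+e_{i'}^2)$. Each node has exactly two followers, so every $e_j^2$ with $j\in\mathcal S\cup\{1',2'\}$ appears as $e_{i'}^2$ at most twice. This gives
\[
\tfrac{d}{dt}\sum e_i^2\le-\sum_i(2k_i-a_{\max}-2a_{\max})e_i^2+2a_{\max}(u_1^2+u_2^2),
\]
up to the exact bookkeeping. Hypothesis \eqref{eq:H1}, $2k_{\min}>3a_{\max}$, is precisely what makes the polymer coefficient uniformly negative. The leftover $a_{\max}u_k^2$ terms (with the correct constant) are absorbed by the root dissipation $-2\beta_kC u_k^2\le-2\beta_k\underline C u_k^2$. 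This is where \eqref{eq:H2a} enters, leaving a margin $(2\beta_k\underline C-a_{\max})u_k^2$.

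The main obstacle is the cross terms between $u_k$ and $w$. They arise as $-2\beta_kx_{k'}^*u_kw$ from the root equation and $-2\beta_kCu_kw$ from the $w$ equation. Their total coefficient is bounded in absolute value by $2\beta_k(\overline C+x_{k'}^*)$. We absorb each by Young's inequality with the weight chosen to consume the root margin exactly:
\[
2\beta_k(\overline C+x_{k'}^*)|u_k||w|\le \theta_k(2\beta_k\underline C-a_{\max})u_k^2+\frac{[\beta_k(\overline C+x_{k'}^*)]^2}{\theta_k(2\beta_k\underline C-a_{\max})}w^2 .
\]
The resource dissipation is $-\tfrac{2r}{K}C w^2\le-\tfrac{2r}{K}\underline C w^2$. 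Taking $\theta_k$ slightly below $1$, the strict inequality \eqref{eq:H2b} leaves a positive margin in $w^2$, while the $u_k^2$ margins remain positive. The delicate part is reconciling the factor-of-two conventions: whether the polymer leakage to roots costs $a_{\max}$ or $2a_{\max}$, and whether Young's inequality is applied with or without the factor $2$. I expect some rescaling (for instance weighting $w^2$ or the polymer sum in $V$) may be needed for the constants to match \eqref{eq:H1}--\eqref{eq:H2b} exactly. I would first try unweighted $V$ and adjust the weights only if the bookkeeping fails.

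Collecting the estimates gives $\dot V_n\le-\beta V_n$ uniformly in $n$. Here $\beta$ is the minimum of the three margins: polymer $2k_{\min}-3a_{\max}$, root $(1-\theta_k)(2\beta_k\underline C-a_{\max})$, and resource $\tfrac{2r}{K}\underline C-\sum_k[\cdots]^2/(\theta_k(\cdots))$. Grönwall's inequality then yields $V_n(t)\le V_n(0)e^{-\beta t}\le V(0)e^{-\beta t}$, and letting $n\to\infty$ gives the stated bound and the convergence to $0$.

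One caveat must be handled. The truncation $V_n$ contains the leakage term $e_i^2$ for leaves of length $n$ without the matching terms of their children. This only helps, since dropping the children's $a_ie_{i'}^2$ contributions makes the inequality easier. So the uniform estimate holds directly for each truncation.
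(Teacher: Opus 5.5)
Your proposal is correct and follows the paper's proof essentially step for step. It uses the same unweighted quadratic functional on depth truncations, the same Young splittings (with $\theta$ chosen just below $1$, which reproduces \eqref{eq:H2b} exactly), Gr\"onwall, and monotone convergence. The bookkeeping you left open closes with no rescaling: each root $k'$ has exactly one follower in $\mathcal S$ (the monomer $k$), so the leakage into the roots is $a_1u_1^2+a_2u_2^2\le a_{\max}(u_1^2+u_2^2)$, not the $2a_{\max}(u_1^2+u_2^2)$ you displayed, and this is precisely what leaves the root margin $2\beta_k\underline C-a_{\max}$ required by \eqref{eq:H2a}.
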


\begin{proof}
Set
\[
\tilde C:=C-C^*,\qquad \tilde x_{1'}:=x_{1'}-x_{1'}^*,\qquad \tilde x_{2'}:=x_{2'}-x_{2'}^*,
\qquad \tilde x_i:=x_i-x_i^* \ (i\in\mathcal{S}).
\]
Using the equilibrium relations
\[
1=\beta_1 C^*x_{1'}^*,\qquad 1=\beta_2 C^*x_{2'}^*,
\qquad
0=a_i x_{i'}^*-k_i x_i^* \ \ (i\in\mathcal{S}),
\]
we obtain for $i\in\mathcal{S}$:

\begin{equation}\label{equation:G1}
\dot{\tilde x}_i = a_i \tilde x_{i'}-k_i \tilde x_i.
\end{equation}

For the roots:
\begin{equation}\label{equation:G2a}
\dot{\tilde x}_{1'}= \bigl(1-\beta_1 Cx_{1'}\bigr)-\bigl(1-\beta_1 C^*x_{1'}^*\bigr)
= -\beta_1\bigl(Cx_{1'}-C^*x_{1'}^*\bigr)
= -\beta_1\bigl(C\tilde x_{1'}+x_{1'}^*\tilde C\bigr),
\end{equation}

and similarly

\begin{equation}\label{equation:G2b}
\dot{\tilde x}_{2'}= -\beta_2\bigl(C\tilde x_{2'}+x_{2'}^*\tilde C\bigr).
\end{equation}

For $C$:
\[
\dot{\tilde C}=\dot C
= C\Bigl[r\Bigl(1-\frac{C}{K}\Bigr)-\beta_1 x_{1'}-\beta_2 x_{2'}\Bigr]
- C\underbrace{\Bigl[r\Bigl(1-\frac{C^*}{K}\Bigr)-\beta_1 x_{1'}^*-\beta_2 x_{2'}^*\Bigr]}_{=0},
\]
hence
\begin{equation}\label{equation:G3}
\dot{\tilde C}= C\Bigl[-\frac{r}{K}\tilde C-\beta_1\tilde x_{1'}-\beta_2\tilde x_{2'}\Bigr].
\end{equation}

\medskip

Let $\mathcal T_N$ denote the finite set of nodes up to depth $N$ (including $1',2'$).
Define
\[
V_N(t):=\tilde C(t)^2+\tilde x_{1'}(t)^2+\tilde x_{2'}(t)^2
+\sum_{\substack{i\in\mathcal T_N\\ i\in\mathcal{S}}}\tilde x_i(t)^2.
\]
Since $V_N$ is a finite sum of $C^1$ functions, it is $C^1$ and
\[
\dot V_N = 2\tilde C\dot{\tilde C}+2\tilde x_{1'}\dot{\tilde x}_{1'}+2\tilde x_{2'}\dot{\tilde x}_{2'}
+\sum_{\substack{i\in\mathcal T_N\\ i\in\mathcal{S}}}2\tilde x_i\dot{\tilde x}_i.
\]

\medskip

Using (40),

\begin{equation}\label{equation: G4}
2\tilde C\dot{\tilde C}
=2\tilde C\,C\Bigl[-\frac{r}{K}\tilde C-\beta_1\tilde x_{1'}-\beta_2\tilde x_{2'}\Bigr]
= -2\frac{r}{K}C\,\tilde C^2
-2\beta_1 C\,\tilde C\,\tilde x_{1'}
-2\beta_2 C\,\tilde C\,\tilde x_{2'}.
\end{equation}

Using (38)--(39),
\begin{equation}\label{equation: G5a}
2\tilde x_{1'}\dot{\tilde x}_{1'}
=2\tilde x_{1'}\bigl[-\beta_1(C\tilde x_{1'}+x_{1'}^*\tilde C)\bigr]
= -2\beta_1 C\,\tilde x_{1'}^2 -2\beta_1 x_{1'}^*\tilde C\,\tilde x_{1'},
\end{equation}

\begin{equation}\label{equation: G5b}
2\tilde x_{2'}\dot{\tilde x}_{2'}
= -2\beta_2 C\,\tilde x_{2'}^2 -2\beta_2 x_{2'}^*\tilde C\,\tilde x_{2'}.
\end{equation}
Finally, using (37),
\begin{equation}\label{equation: G6}
2\tilde x_i\dot{\tilde x}_i = 2\tilde x_i(a_i\tilde x_{i'}-k_i\tilde x_i)
= -2k_i\tilde x_i^2 + 2a_i\tilde x_i\tilde x_{i'}.
\end{equation}

Summing (41)--(44) gives
\begin{align}\label{equation: G7}
\dot V_N
&=
-2\frac{r}{K}C\,\tilde C^2
-2\beta_1 C\,\tilde x_{1'}^2
-2\beta_2 C\,\tilde x_{2'}^2
-\sum_{\substack{i\in\mathcal T_N\\ i\in\mathcal{S}}}2k_i\tilde x_i^2
+\sum_{\substack{i\in\mathcal T_N\\ i\in\mathcal{S}}}2a_i\tilde x_i\tilde x_{i'}
\nonumber\\
&\quad
-2\beta_1(C+x_{1'}^*)\tilde C\,\tilde x_{1'}
-2\beta_2(C+x_{2'}^*)\tilde C\,\tilde x_{2'}.
\end{align}

\medskip

For each $i\in\mathcal{S}$, $2a_i\tilde x_i\tilde x_{i'} \le a_i\tilde x_i^2+a_i\tilde x_{i'}^2 \le a_{\max}\tilde x_i^2 + a_i\tilde x_{i'}^2.$ Summing over $i\in\mathcal T_N\setminus\{1',2'\}$,
\[
\sum_{\substack{i\in\mathcal T_N\\ i\in\mathcal{S}}}2a_i\tilde x_i\tilde x_{i'}
\le a_{\max}\sum_{\substack{i\in\mathcal T_N\\ i\in\mathcal{S}}}\tilde x_i^2 + \sum_{\substack{i\in\mathcal T_N\\ i\in\mathcal{S}}}a_i\tilde x_{i'}^2.
\]
Because the tree is binary, each node $j$ has at most two children, hence
\[
\sum_{\substack{i\in\mathcal T_N\\ i\in\mathcal{S}}} a_i\tilde x_{i'}^2
\le 2a_{\max}\sum_{\substack{i\in\mathcal T_N\\ i\in\mathcal{S}}}\tilde x_j^2 + a_{\max}\bigl(\tilde x_{1'}^2+\tilde x_{2'}^2\bigr).
\]
Therefore,
\begin{equation}\label{equation: G8}
\sum_{\substack{i\in\mathcal T_N\\ i\in\mathcal{S}}} 2a_i\tilde x_i\tilde x_{i'}
\le 3a_{\max}\sum_{\substack{j\in\mathcal T_N\\ j\in\mathcal{S}}}\tilde x_j^2 + a_{\max}\bigl(\tilde x_{1'}^2+\tilde x_{2'}^2\bigr).
\end{equation}
Combining (46) with $k_i\ge k_{\min}$ yields
\begin{equation}\label{equation: G9}
-\sum_{\substack{i\in\mathcal T_N\\ i\in\mathcal{S}}}2k_i\tilde x_i^2 + \sum_{\substack{i\in\mathcal T_N\\ i\in\mathcal{S}}}2a_i\tilde x_i\tilde x_{i'}
\le
-\bigl(2k_{\min}-3a_{\max}\bigr)\sum_{\substack{i\in\mathcal T_N\\ i\in\mathcal{S}}}\tilde x_i^2
+ a_{\max}\bigl(\tilde x_{1'}^2+\tilde x_{2'}^2\bigr).
\end{equation}

Set
\[
\eta:=2k_{\min}-3a_{\max}.
\]
Then by (34), $\eta>0$. Insert (47) into (45). Using (33) and by Lemma 1.4, $C(t)\ge \underline C$ and $C(t)\le \overline C$ for all $t$. Thus for $k=1,2$,
\begin{equation}\label{equation: G10}
-2\beta_k C\,\tilde x_{k'}^2 + a_{\max}\tilde x_{k'}^2
\le -(2\beta_k \underline C-a_{\max})\,\tilde x_{k'}^2.
\end{equation}

\medskip

Fix a parameter $\theta\in(0,1)$.
For $k=1,2$,
\[
2\beta_k|C+x_{k'}^*|\,|\tilde C|\,|\tilde x_{k'}|
\le 2\beta_k(\overline C+x_{k'}^*)\,|\tilde C|\,|\tilde x_{k'}|.
\]
Apply Young's inequality in the form
\[
2ab\le \theta\,\xi\,b^2+\frac{a^2}{\theta\,\xi}
\qquad(\theta\in(0,1),\ \xi>0),
\]
with
\[
a:=\beta_k(\overline C+x_{k'}^*)|\tilde C|,
\qquad
b:=|\tilde x_{k'}|,
\qquad
\xi_k:=2\beta_k\underline C-a_{\max}>0
\quad\text{(by (35)).}
\]
This gives
\begin{equation}\label{equation: 11}
2\beta_k(\overline C+x_{k'}^*)\,|\tilde C|\,|\tilde x_{k'}|
\le
\theta(2\beta_k\underline C-a_{\max})\,\tilde x_{k'}^2
+
\frac{\bigl[\beta_k(\overline C+x_{k'}^*)\bigr]^2}{\theta(2\beta_k\underline C-a_{\max})}\tilde C^2.
\end{equation}

Combining (48) with (49) yields, for each $k=1,2$,
\begin{equation}\label{equation: G12}
-(2\beta_k\underline C-a_{\max})\,\tilde x_{k'}^2
+2\beta_k(\overline C+x_{k'}^*)\,|\tilde C|\,|\tilde x_{k'}|
\le
-(1-\theta)(2\beta_k\underline C-a_{\max})\,\tilde x_{k'}^2
+
\frac{\bigl[\beta_k(\overline C+x_{k'}^*)\bigr]^2}{\theta(2\beta_k\underline C-a_{\max})}\tilde C^2.
\end{equation}

\medskip

Using (45), (48), and (50), and also $C\ge \underline C$ in the term
$-2\frac{r}{K}C\,\tilde C^2$, we get
\begin{align*}
\dot V_N(t)
&\le
-2\frac{r}{K}\underline C\,\tilde C(t)^2
-\eta\sum_{\substack{i\in\mathcal T_N\\ i\in\mathcal{S}}}\tilde x_i(t)^2
-\sum_{k=1}^2 (1-\theta)(2\beta_k\underline C-a_{\max})\,\tilde x_{k'}(t)^2
\nonumber\\
&\quad
+\sum_{k=1}^2
\frac{\bigl[\beta_k(\overline C+x_{k'}^*)\bigr]^2}{\theta(2\beta_k\underline C-a_{\max})}\,\tilde C(t)^2.
\end{align*}
Grouping the $\tilde C^2$ terms gives

\begin{equation}\label{equation: G14}
\dot V_N(t)
\leq
-\gamma_C(\theta)\,\tilde C(t)^2
-\eta\sum_{\substack{i\in\mathcal T_N\\ i\in\mathcal{S}}}\tilde x_i(t)^2
-\lambda_1(\theta)\,\tilde x_{1'}(t)^2
-\lambda_2(\theta)\,\tilde x_{2'}(t)^2,
\end{equation}

where
\[
\lambda_k(\theta):=(1-\theta)(2\beta_k\underline C-a_{\max})>0\qquad (k=1,2),
\]
and
\[
\gamma_C(\theta):=
\frac{2r}{K}\underline C
-
\frac{\bigl[\beta_1(\overline C+x_{1'}^*)\bigr]^2}{\theta(2\beta_1\underline C-a_{\max})}
-
\frac{\bigl[\beta_2(\overline C+x_{2'}^*)\bigr]^2}{\theta(2\beta_2\underline C-a_{\max})}.
\]
Set
\[
S:=\frac{\bigl[\beta_1(\overline C+x_{1'}^*)\bigr]^2}{2\beta_1\underline C-a_{\max}}
+\frac{\bigl[\beta_2(\overline C+x_{2'}^*)\bigr]^2}{2\beta_2\underline C-a_{\max}}.
\]
Assumption \textup{(36)} reads $S<\frac{2r}{K}\underline C$. Fix, once for all, as
\[
\theta\in\left(\frac{S}{\frac{2r}{K}\underline C},\,1\right).
\]
Then $\gamma_C(\theta)=\frac{2r}{K}\underline C-\frac{S}{\theta}>0$. Then all coefficients in (51) are strictly positive. Setting
\[
\beta:=\min\{\gamma_C(\theta),\eta,\lambda_1(\theta),\lambda_2(\theta)\}>0,
\]
and using
\[
V_N(t)=\tilde C(t)^2+\tilde x_{1'}(t)^2+\tilde x_{2'}(t)^2
+\sum_{\substack{i\in\mathcal T_N\\ i\in\mathcal{S}}}\tilde x_i(t)^2,
\]
we conclude from (51) that
\begin{equation}\label{equation: G15}
\dot V_N(t)\le -\beta\,V_N(t)\qquad \forall t\in[0,\infty).
\end{equation}

\medskip

By Gr\"onwall's lemma applied to (52),
\[
V_N(t)\le V_N(0)\,e^{-\beta t}\qquad\forall t\ge 0,\ \forall N.
\]

\medskip

Fix $t\ge 0$. Recall that $\mathcal T_N$ is the set of nodes up to depth $N$ (including $1',2'$), hence
\[
\mathcal T_N\subset \mathcal T_{N+1},\qquad \bigcup_{N\ge 0}\mathcal T_N=\mathcal S\cup\{1', 2'\},
\]
and every sum in $V_N(t)$ is finite. Define
\[
S_N(t):=\sum_{\substack{i\in\mathcal T_N\\ i\in\mathcal{S}}}\tilde x_i(t)^2,
\qquad
S(t):=\sum_{i\in\mathcal{S}}\tilde x_i(t)^2\in[0,\infty].
\]
Since all terms $\tilde x_i(t)^2\ge 0$, the sequence $(S_N(t))_{N\ge 0}$ is monotone nondecreasing:
\[
S_N(t)\le S_{N+1}(t)\qquad\text{for all }N,
\]
and by definition of $S(t)$ as the (possibly divergent) series over all indices,
\[
S_N(t)\uparrow S(t)\quad\text{as }N\to\infty.
\]
Therefore, for each fixed $t\ge 0$,
\[
V_N(t)
=\tilde C(t)^2+\tilde x_{1'}(t)^2+\tilde x_{2'}(t)^2 + S_N(t)
\uparrow
\tilde C(t)^2+\tilde x_{1'}(t)^2+\tilde x_{2'}(t)^2 + S(t)
=:V(t).
\]
The same argument at $t=0$ gives $V_N(0)\uparrow V(0)$. As part of the hypotheses, the initial deviation belongs to $\ell^2$, i.e.
\[
V(0)=\tilde C(0)^2+\tilde x_{1'}(0)^2+\tilde x_{2'}(0)^2+\sum_{i\in\mathcal{S}}\tilde x_i(0)^2<\infty.
\]
Then in particular $V_N(0)\le V(0)<\infty$ for every $N$. Now fix $t\ge 0$ and use the estimate obtained for each finite truncation:
\[
V_N(t)\le V_N(0)e^{-\beta t}\qquad\forall N.
\]
Since $V_N(0)\le V(0)$, we have the uniform bound
\[
V_N(t)\le V(0)e^{-\beta t}\qquad\forall N.
\]
Taking $N\to\infty$ and using the monotone convergence $V_N(t)\uparrow V(t)$ yields
\[
V(t)=\lim_{N\to\infty}V_N(t)
\le \lim_{N\to\infty} V_N(0)e^{-\beta t}
=\bigl(\lim_{N\to\infty}V_N(0)\bigr)e^{-\beta t}
=V(0)e^{-\beta t}.
\]
Hence
\[
V(t)\le V(0)e^{-\beta t}\qquad\forall t\ge 0.
\]

In particular $V(t)\to 0$ as $t\to\infty$, hence every component converges to its equilibrium value
with exponential rate $\beta$ (in the $\ell^2$ sense).
\end{proof}

Theorem~\ref{thm:core-GAS} states that, for the pure polymerization--competition model \eqref{eq:5}, the positive equilibrium associated with the upper stable resource branch \(C^*=C_+^*\) is globally exponentially stable in the stated \(\ell^2\)-framework, provided the persistence, dominance, and square-summability hypotheses of the theorem hold. We now extend the model by incorporating \emph{template-directed replication and selection} among the non-activated sequences. In particular, we keep the same core subsystem for the activated monomers and the competitor, while adding to each downstream variable $x_i$ a standard replicator term that accounts for differential fitness. This leads to the following binary polymerization system with competition and replication.

\section{A binary system of polymerization with competition and replication}
Taking in replication and competition into dynamics of the system of polymerization yields the following differential equation:
\begin{equation}\label{eq:7}
 \begin{aligned}
        \dot x_{1'}&=1-\beta_1 Cx_{1'},\\
        \dot x_{2'}&=1-\beta_2 Cx_{2'},\\
        \dot C&=C[r(1-\dfrac{C}{K})-\beta_1 x_{1'}-\beta_2 x_{2'}], \\
         \dot x_{i}&= a_{i}x_{i^{'}}-(d+ a_{i1}+a_{i2})x_{i}+\mathcal{R}x_i(f_i-\phi) \text{ for $i\in\mathcal{S}$}.
\end{aligned}
\end{equation}

The first three lines of this system describe the same as in Eq. (\ref{eq:5}). The first part of $\dot x_{i}$ in (\ref{eq:7}) pictures the dynamics of non-activated monomers $i$ as in (\ref{eq:5}), while the second part represents the standard selection equation. The parameter, $\mathcal{R},$ measures the relative rate of template-directed replication. $f_i$ denotes the fitness of sequence $i$. The amount $\phi$ is an additional decay rate, which represents the average fitness of the population, i.e.,

$$\phi=\dfrac{\Sigma_if_ix_i}{\Sigma_ix_i}\text{ for }i\in\mathcal{S}.$$

We denote the net reproductive rate of sequence $i$ (different from the activated monomers) by

\begin{equation}\label{eq:8}
g_i:= \mathcal{R}(f_i-\phi)-(d+a_{i1}+a_{i2}).
\end{equation}

Moreover, we consider $x=(x_i)_{i\in\mathcal{S}} \in\ell^1,$ and $0<f_i<1.$ Here is the formal proposition summarizing the nonnegativity and boundedness results for the system with replication dynamics:

\begin{proposition}[Persistence and destruction of the pre-replicative equilibrium under replication]
\label{lem:replication_preserve_loss}
Consider the polymerization--competition system without replication \eqref{eq:5}, and let
\[
E^*=(C^*,x_{1'}^*,x_{2'}^*,(x_i^*)_{i\in\mathcal S})
\]
be a strictly positive equilibrium of that system. Assume that the polymer equilibrium profile is summable and that the fitness landscape is bounded, namely
\[
(x_i^*)_{i\in\mathcal S}\in \ell^1_+(\mathcal S),
\qquad
(f_i)_{i\in\mathcal S}\in \ell^\infty(\mathcal S).
\]
Then the equilibrium fitness average
\begin{equation}\label{eq:phi_star_def}
\phi^*
:=
\frac{\sum_{i\in\mathcal S} f_i x_i^*}
{\sum_{i\in\mathcal S} x_i^*}
\end{equation}
is well-defined.

Consider now the replicated system \eqref{eq:7}, with replication intensity \(\mathcal R\ge 0\), the same coefficients
\[
\beta_1,\beta_2,r,K,\qquad a_i,\qquad k_i:=d+a_{i1}+a_{i2},
\]
and the same fitness landscape \((f_i)_{i\in\mathcal S}\). Then the following alternatives hold.

\begin{enumerate}
\item[\textnormal{(i)}] \textnormal{\bf Neutral preservation.}
If \(\mathcal R=0\), then \eqref{eq:7} coincides with \eqref{eq:5}. Hence \(E^*\) is an equilibrium of \eqref{eq:7}, with exactly the same stability properties as in the non-replicated system.

If \(\mathcal R>0\) and \(f_i\equiv f_\star\) is constant on \(\mathcal S\), then, for every admissible state with \(\sum_{i\in\mathcal S}x_i>0\), one has
\[
\phi=f_\star.
\]
Consequently, the replication term vanishes identically:
\[
\mathcal R x_i(f_i-\phi)=0
\qquad
(i\in\mathcal S).
\]
Thus the replicated system \eqref{eq:7} again coincides with the non-replicated system \eqref{eq:5}. In particular, \(E^*\) remains an equilibrium and all its stability properties are preserved.

\item[\textnormal{(ii)}] \textnormal{\bf Destruction under heterogeneous fitness.}
Assume \(\mathcal R>0\). Then \(E^*\) is an equilibrium of the replicated system \eqref{eq:7} if and only if
\[
f_i=\phi^*
\qquad
\text{for every } i\in\mathcal S.
\]
Since \(x_i^*>0\) for all \(i\), this is equivalent to saying that the fitness landscape is constant on \(\mathcal S\). Therefore, if the fitness landscape is heterogeneous, i.e. if there exist \(i,j\in\mathcal S\) with \(f_i\neq f_j\), then \(E^*\) is not an equilibrium of \eqref{eq:7}. In particular, \(E^*\) cannot be a Lyapunov stable equilibrium of the replicated system.
\end{enumerate}
\end{proposition}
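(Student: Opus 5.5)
The plan is to first settle well-definedness of \(\phi^*\), then compare the two vector fields at \(E^*\) term by term, using that the core equations are identical in \eqref{eq:5} and \eqref{eq:7}.

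\emph{Well-definedness of \(\phi^*\).} Since \((x_i^*)\in\ell^1_+(\mathcal S)\) and every \(x_i^*>0\), the denominator \(\sum_i x_i^*\) is finite and strictly positive. The numerator satisfies \(\sum_i |f_i|x_i^*\le \|f\|_\infty\sum_i x_i^*<\infty\), so it converges absolutely. Hence \(\phi^*\) is a well-defined real number.

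\emph{Part (i).} For \(\mathcal R=0\) the replication term is multiplied by zero, so \eqref{eq:7} is literally \eqref{eq:5}; \(E^*\) is therefore an equilibrium, and every stability statement (Theorem~\ref{thm:core-GAS}, Lemma~\ref{lem:reduced_vs_full}) transfers verbatim. For \(\mathcal R>0\) and \(f_i\equiv f_\star\), take any admissible state with \(0<\sum_i x_i<\infty\). Then \(\sum_i f_ix_i=f_\star\sum_i x_i\), so \(\phi=f_\star\) and \(\mathcal R x_i(f_i-\phi)=0\) for every \(i\). The vector fields of \eqref{eq:7} and \eqref{eq:5} thus agree on the admissible region, which is forward invariant by Lemma~\ref{lem:positivity}. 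Consequently they have the same solutions, and the same equilibrium and stability properties on that region.

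\emph{Part (ii).} I will evaluate the right-hand side of \eqref{eq:7} at \(E^*\). The three core equations coincide with those of \eqref{eq:5}, so they vanish at \(E^*\). For \(i\in\mathcal S\), the equilibrium relation \(a_ix_{i'}^*-k_ix_i^*=0\) from \eqref{eq:core-rec} leaves
\[
\dot x_i\big|_{E^*}=\mathcal R\,x_i^*(f_i-\phi^*),
\]
where \(\phi\) evaluated at \(E^*\) equals \(\phi^*\). Since \(\mathcal R>0\) and \(x_i^*>0\), this vanishes for all \(i\) if and only if \(f_i=\phi^*\) for all \(i\), which proves the equivalence.

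Next I show that "\(f_i=\phi^*\) for all \(i\)" is equivalent to \(f\) being constant. The forward direction is immediate. Conversely, if \(f\equiv c\), then \(\phi^*=c\) by the computation in part (i). So if the landscape is heterogeneous, some \(f_i\ne\phi^*\), the corresponding component of the vector field is nonzero, and \(E^*\) is not an equilibrium. Finally, a Lyapunov stable equilibrium is in particular an equilibrium (a rest point whose nearby trajectories stay close), so \(E^*\) cannot be Lyapunov stable for \eqref{eq:7}. I would phrase this last step as a definitional remark.

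The only delicate point is interpretive rather than computational. \(\phi\) is a nonlocal functional, so I must check that it is defined at \(E^*\) (guaranteed by the \(\ell^1\) assumption) and, in part (i), that the identity \(\phi=f_\star\) holds on the whole admissible set. Otherwise "coincides with \eqref{eq:5}" would only hold at the point \(E^*\) rather than as an identity of vector fields. I will therefore state explicitly that admissible states are those with \(x\in\ell^1\) and \(\sum_i x_i>0\), as stipulated after \eqref{eq:8}.
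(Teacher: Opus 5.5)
Your proposal is correct and follows the paper's proof essentially line by line. You evaluate \eqref{eq:7} at $E^*$, cancel $a_ix_{i'}^*-k_ix_i^*$, and read off $\mathcal R x_i^*(f_i-\phi^*)=0$, and you also supply two details the paper only asserts: the well-definedness of $\phi^*$, and the equivalence between $f_i\equiv\phi^*$ and constancy of $f$.

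The only slip is citing Lemma~\ref{lem:positivity} for forward invariance of the admissible set. That lemma yields $\sum_i x_i>0$ but not $\sum_i x_i<\infty$, and the paper passes over this point. If you want identical trajectories rather than just identical vector fields, add a truncated mass balance: summing over $\{L(i)\le N\}$, each loss $a_{ij}x_i$ cancels the gain of $ij$, which gives $\frac{d}{dt}\sum_{L(i)\le N}x_i\le a_1x_{1'}+a_2x_{2'}-d\sum_{L(i)\le N}x_i$. The root bounds of Lemma~\ref{lem:global-existence} then let you send $N\to\infty$.
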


\begin{proof}
If \(\mathcal R=0\), then \eqref{eq:7} reduces exactly to \eqref{eq:5}, so \(E^*\) is an equilibrium of \eqref{eq:7} and all stability properties are identical.

Assume next that \(\mathcal R>0\) and \(f_i\equiv f_\star\). Then, for every admissible state with \(\sum_i x_i>0\),
\[
\phi
=
\frac{\sum_{i\in\mathcal S} f_\star x_i}
{\sum_{i\in\mathcal S}x_i}
=
f_\star.
\]
Therefore \(f_i-\phi=0\) for every \(i\in\mathcal S\), and the replication term vanishes identically. Hence \eqref{eq:7} coincides with \eqref{eq:5}, proving the preservation claim.

Now assume \(\mathcal R>0\). Since \(E^*\) is an equilibrium of the non-replicated system, we have
\[
a_i x_{i'}^* - k_i x_i^*=0
\qquad
(i\in\mathcal S).
\]
If \(E^*\) is also an equilibrium of the replicated system \eqref{eq:7}, then for every \(i\in\mathcal S\),
\[
0
=
a_i x_{i'}^*-k_i x_i^*
+
\mathcal R x_i^*(f_i-\phi^*)
=
\mathcal R x_i^*(f_i-\phi^*).
\]
Because \(\mathcal R>0\) and \(x_i^*>0\), it follows that
\[
f_i=\phi^*
\qquad
\text{for every } i\in\mathcal S.
\]
Thus the fitness landscape must be constant. Conversely, if \(f_i=\phi^*\) for every \(i\), then the replication term vanishes at \(E^*\), and \(E^*\) satisfies the equilibrium equations of \eqref{eq:7}. This proves the necessary and sufficient condition.

Hence, if the fitness landscape is heterogeneous, \(E^*\) cannot be an equilibrium of \eqref{eq:7}. Since Lyapunov stability as an equilibrium presupposes that the point is an equilibrium, \(E^*\) cannot be a Lyapunov stable equilibrium of the replicated system.
\end{proof}

\begin{lemma}[Pairwise log-ratio identity]
\label{lem:log_ratio_identity}
Consider the replicated system \eqref{eq:7}. Let \(i,j\in\mathcal S\), and suppose that along a positive trajectory one has
\[
x_i(t)>0,\qquad x_j(t)>0
\qquad
\text{for all } t\ge 0.
\]
Assume also that the average fitness \(\phi(t)\) is well-defined along the trajectory. Then
\begin{equation}\label{eq:log_ratio_identity}
\frac{d}{dt}\log\frac{x_i(t)}{x_j(t)}
=
\left(
a_i\frac{x_{i'}(t)}{x_i(t)}
-
a_j\frac{x_{j'}(t)}{x_j(t)}
\right)
-(k_i-k_j)
+
\mathcal R(f_i-f_j).
\end{equation}
In particular, the average fitness term \(\phi(t)\) cancels from all pairwise log-ratio dynamics.
\end{lemma}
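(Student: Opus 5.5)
The identity is a direct computation. Since $x_i$ and $x_j$ stay strictly positive on $[0,\infty)$ and are $C^1$ there, the functions $\log x_i$ and $\log x_j$ are $C^1$. Their difference satisfies
\[
\frac{d}{dt}\log\frac{x_i}{x_j}=\frac{\dot x_i}{x_i}-\frac{\dot x_j}{x_j}.
\]
So we only need to divide each polymer equation of \eqref{eq:7} by its own variable and subtract.

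\emph{Per-capita rates.} Writing $k_i=d+a_{i1}+a_{i2}$, the $x_i$-equation of \eqref{eq:7} divided by $x_i(t)>0$ gives
\[
\frac{\dot x_i}{x_i}=a_i\frac{x_{i'}}{x_i}-k_i+\mathcal R\bigl(f_i-\phi\bigr).
\]
The same holds with $j$ in place of $i$. Here $x_{i'}$ is either a polymer variable or one of the root variables $x_{1'},x_{2'}$. In either case it is a well-defined continuous function along the trajectory, by Lemma~\ref{lem:global-existence} and the componentwise construction. No positivity of $x_{i'}$ is needed, since it appears only in a numerator.

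\emph{Subtraction.} Subtracting the two per-capita rates, the term $\mathcal R\phi(t)$ appears with coefficient $-1$ in both expressions. Because $\phi(t)$ is assumed well-defined, it is a finite real number at each $t$ and cancels exactly. What remains is
\[
\Bigl(a_i\frac{x_{i'}}{x_i}-a_j\frac{x_{j'}}{x_j}\Bigr)-(k_i-k_j)+\mathcal R(f_i-f_j),
\]
which is \eqref{eq:log_ratio_identity}.

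\emph{Main obstacle.} The only delicate point is regularity. We need $\phi(t)$ finite so that the cancellation is legitimate and not a formal $\infty-\infty$. We also need $x_i$ and $x_j$ to be differentiable solutions of their equations, so that the chain rule for $\log$ applies. Both are covered by the hypotheses: $\phi$ is assumed well-defined along the trajectory, and $x_i,x_j$ are assumed positive. No further estimates are required. The final remark, that $\phi$ drops out of every pairwise log-ratio, is then immediate because $i,j$ were arbitrary.
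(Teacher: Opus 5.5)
Your proposal is correct and follows the paper's proof: divide each replicated equation by its own positive variable, subtract the two per-capita rates, and the \(\mathcal R\phi\) terms cancel. One minor point is that your appeal to Lemma~\ref{lem:global-existence} is misplaced, since that lemma concerns the non-replicated system \eqref{eq:5}, where \(\phi\) does not couple all components; it is also unnecessary, because \(x_{i'}(t)\) is simply a component of the given trajectory of \eqref{eq:7}.
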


\begin{proof}
For each \(q\in\mathcal S\), the replicated equation is
\[
\dot x_q
=
a_q x_{q'}-k_q x_q+\mathcal R x_q(f_q-\phi).
\]
For \(x_q(t)>0\), division by \(x_q(t)\) gives
\[
\frac{\dot x_q}{x_q}
=
a_q\frac{x_{q'}}{x_q}
-k_q
+\mathcal R(f_q-\phi).
\]
Applying this identity to \(q=i\) and \(q=j\), and subtracting, we get
\[
\frac{d}{dt}\log\frac{x_i}{x_j}
=
\frac{\dot x_i}{x_i}
-
\frac{\dot x_j}{x_j}
=
\left(
a_i\frac{x_{i'}}{x_i}
-
a_j\frac{x_{j'}}{x_j}
\right)
-(k_i-k_j)
+
\mathcal R(f_i-f_j),
\]
because the two occurrences of \(\phi\) cancel.
\end{proof}

\begin{proposition}[Pairwise exponential divergence under replication dominance]
\label{lem:ratio_divergence}
Consider the replicated system \eqref{eq:7}. Fix two indices \(i,j\in\mathcal S\), and suppose that along a positive trajectory
\[
x_i(t)>0,\qquad x_j(t)>0
\qquad
\text{for all } t\ge 0.
\]
Assume that the average fitness \(\phi(t)\) is well-defined along the trajectory. Define the fitness gap
\[
\Delta_{ij}:=f_i-f_j,
\]
and assume
\[
\Delta_{ij}>0.
\]
Suppose moreover that the kinetic forcing into the less fit sequence \(j\) is uniformly bounded along the trajectory, namely that there exists \(M_j<\infty\) such that
\begin{equation}\label{eq:Mj_bound}
0\le
a_j\frac{x_{j'}(t)}{x_j(t)}
\le M_j
\qquad
\text{for all } t\ge 0.
\end{equation}
Set
\[
\kappa_{ij}^+:=(k_i-k_j)_+
:=
\max\{k_i-k_j,0\}.
\]
If
\begin{equation}\label{eq:R_threshold}
\mathcal R\Delta_{ij}
>
M_j+\kappa_{ij}^+,
\end{equation}
then there exists
\[
\varepsilon
:=
\mathcal R\Delta_{ij}-M_j-\kappa_{ij}^+
>0
\]
such that
\[
\frac{d}{dt}\log\frac{x_i(t)}{x_j(t)}
\ge
\varepsilon
\qquad
\text{for all } t\ge 0.
\]
Consequently,
\begin{equation}\label{eq:pairwise_exp_divergence}
\frac{x_i(t)}{x_j(t)}
\ge
\frac{x_i(0)}{x_j(0)}e^{\varepsilon t}
\qquad
\text{for all } t\ge 0,
\end{equation}
and hence
\[
\lim_{t\to\infty}\frac{x_i(t)}{x_j(t)}
=
+\infty.
\]
\end{proposition}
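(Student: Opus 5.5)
The plan is to start from the pairwise log-ratio identity of Lemma~\ref{lem:log_ratio_identity} and bound each of its three terms from below. Along the given positive trajectory, with $\phi(t)$ well-defined, that lemma gives for all $t\ge 0$
\[
\frac{d}{dt}\log\frac{x_i(t)}{x_j(t)}
=
a_i\frac{x_{i'}(t)}{x_i(t)}
-
a_j\frac{x_{j'}(t)}{x_j(t)}
-(k_i-k_j)
+\mathcal R\Delta_{ij}.
\]
The average fitness $\phi$ has already cancelled, so no information on the whole population (summability of $x$, the behaviour of $\phi$) is needed. The argument is therefore purely pairwise.

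The three terms are handled as follows.
\begin{itemize}
\item The forcing into $i$: since $a_i>0$, $x_i(t)>0$, and $x_{i'}(t)\ge 0$ by Lemma~\ref{lem:positivity} (applied to the precursor, which is either a root $1',2'$ or a polymer), we have $a_i x_{i'}/x_i\ge 0$. This term is simply dropped.
\item The forcing into $j$: hypothesis \eqref{eq:Mj_bound} gives $-a_j x_{j'}/x_j\ge -M_j$.
\item The decay mismatch: $-(k_i-k_j)\ge -(k_i-k_j)_+=-\kappa_{ij}^+$.
\end{itemize}
Summing these bounds yields
\[
\frac{d}{dt}\log\frac{x_i}{x_j}\ge \mathcal R\Delta_{ij}-M_j-\kappa_{ij}^+=\varepsilon ,
\]
and $\varepsilon>0$ exactly by the threshold \eqref{eq:R_threshold}.

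To conclude, I would integrate this differential inequality on $[0,t]$. The function $t\mapsto\log(x_i/x_j)$ is $C^1$ because both components are $C^1$ and strictly positive, so integration gives $\log\frac{x_i(t)}{x_j(t)}\ge\log\frac{x_i(0)}{x_j(0)}+\varepsilon t$. Exponentiating gives \eqref{eq:pairwise_exp_divergence}, and the limit $+\infty$ follows since $\varepsilon>0$ and $x_i(0)/x_j(0)>0$.

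The only delicate point is the sign of the dropped term $a_i x_{i'}/x_i$. It requires nonnegativity of the precursor $x_{i'}$ along the replicated trajectory, whereas Lemma~\ref{lem:positivity} was proved for system \eqref{eq:5}. For roots this is immediate, because the core subsystem is unchanged. For polymer precursors, the variation-of-constants argument of Lemma~\ref{lem:positivity} carries over: the extra term $\mathcal R x_q(f_q-\phi)$ merely modifies the effective decay rate to a continuous function of time. Alternatively, nonnegativity follows from the standing assumption that the trajectory is positive.
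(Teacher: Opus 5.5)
Your proposal is correct and follows the paper's proof essentially step for step. You invoke the log-ratio identity of Lemma~\ref{lem:log_ratio_identity}, drop the nonnegative term $a_i x_{i'}/x_i$, bound the $j$-forcing by $M_j$ and the decay mismatch by $\kappa_{ij}^+$, then integrate and exponentiate. Your remark on why $x_{i'}\ge 0$ holds along the replicated trajectory covers a point the paper asserts without justification, so it strengthens the argument.
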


\begin{proof}
By Lemma~\ref{lem:log_ratio_identity},
\[
\frac{d}{dt}\log\frac{x_i}{x_j}
=
\left(
a_i\frac{x_{i'}}{x_i}
-
a_j\frac{x_{j'}}{x_j}
\right)
-(k_i-k_j)
+
\mathcal R\Delta_{ij}.
\]
Since \(a_i>0\), \(x_{i'}(t)\ge 0\), and \(x_i(t)>0\), we have
\[
a_i\frac{x_{i'}(t)}{x_i(t)}\ge 0.
\]
Moreover, by \eqref{eq:Mj_bound},
\[
-a_j\frac{x_{j'}(t)}{x_j(t)}
\ge
-M_j.
\]
Finally,
\[
-(k_i-k_j)\ge -(k_i-k_j)_+=-\kappa_{ij}^+.
\]
Combining these estimates gives
\[
\frac{d}{dt}\log\frac{x_i}{x_j}
\ge
-M_j-\kappa_{ij}^+
+
\mathcal R\Delta_{ij}
=
\varepsilon>0.
\]
Integrating over \([0,t]\), we get

\[
\log\frac{x_i(t)}{x_j(t)}
-
\log\frac{x_i(0)}{x_j(0)}
\ge
\varepsilon t.
\]
Exponentiating yields \eqref{eq:pairwise_exp_divergence}, and therefore
\[
\frac{x_i(t)}{x_j(t)}\to+\infty
\qquad
\text{as } t\to\infty.
\]
\end{proof}

\begin{remark}
A simpler but slightly more conservative sufficient condition is
\[
\mathcal R\Delta_{ij}>M_j+|k_i-k_j|.
\]
This follows from the estimate
\[
-(k_i-k_j)\ge -|k_i-k_j|.
\]
The sharper condition \eqref{eq:R_threshold} uses only the positive part \((k_i-k_j)_+\), because a larger decay rate for \(j\) relative to \(i\) already helps the divergence of \(x_i/x_j\).
\end{remark}

\begin{proposition}[Uniform divergence along two full lineages]
\label{lem:lineage_divergence}
Consider the replicated system \eqref{eq:7}, and assume that the solution is strictly positive, i.e.
\[
x_i(t)>0
\qquad
\text{for all } i\in\mathcal S,\ t\ge 0.
\]
Assume also that the average fitness \(\phi(t)\) is well-defined along the trajectory.

Let
\[
\mathcal L=\{\ell_n\}_{n\ge 0},
\qquad
\mathcal M=\{m_n\}_{n\ge 0}
\]
be two full lineages, meaning that
\[
\ell_0,m_0\in\{1',2'\},
\qquad
\ell_n'= \ell_{n-1},
\qquad
m_n'=m_{n-1}
\qquad
(n\ge 1).
\]
Assume that the kinetic forcing into the second lineage is uniformly bounded: there exists \(M_{\mathcal M}<\infty\) such that, for all \(n\ge 1\) and all \(t\ge 0\),
\begin{equation}\label{eq:forcing_bound_lineage_M}
0\le
a_{m_n}\frac{x_{m_{n-1}}(t)}{x_{m_n}(t)}
\le
M_{\mathcal M}.
\end{equation}
Assume also that the relative decay disadvantage of the first lineage is uniformly bounded:
\begin{equation}\label{eq:kappa_plus_def}
\kappa_+
:=
\sup_{n\ge 1}(k_{\ell_n}-k_{m_n})_+
<\infty,
\qquad
k_q:=d+a_{q1}+a_{q2}.
\end{equation}
Finally, assume that the first lineage has a uniform fitness advantage over the second:
\begin{equation}\label{eq:Delta_def}
\Delta
:=
\inf_{n\ge 1}\bigl(f_{\ell_n}-f_{m_n}\bigr)
>0.
\end{equation}

If the replication intensity satisfies
\begin{equation}\label{eq:R_threshold_lineage}
\mathcal R\Delta
>
M_{\mathcal M}+\kappa_+,
\end{equation}
then there exists
\[
\varepsilon
:=
\mathcal R\Delta-M_{\mathcal M}-\kappa_+
>0
\]
such that, for every \(n\ge 1\) and every \(t\ge 0\),
\[
\frac{d}{dt}\log\frac{x_{\ell_n}(t)}{x_{m_n}(t)}
\ge
\varepsilon.
\]
Consequently, for every \(n\ge 1\),
\begin{equation}\label{eq:lineage_exp_divergence}
\frac{x_{\ell_n}(t)}{x_{m_n}(t)}
\ge
\frac{x_{\ell_n}(0)}{x_{m_n}(0)}e^{\varepsilon t}
\qquad
\text{for all } t\ge 0.
\end{equation}
In particular,
\[
\lim_{t\to\infty}
\frac{x_{\ell_n}(t)}{x_{m_n}(t)}
=
+\infty
\qquad
\text{for every fixed } n\ge 1.
\]
Thus the first lineage separates exponentially from the second at every depth, with a rate \(\varepsilon\) independent of \(n\).
\end{proposition}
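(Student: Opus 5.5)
The plan is to reduce the statement to Proposition~\ref{lem:ratio_divergence} (or, equivalently, to Lemma~\ref{lem:log_ratio_identity} directly), applied to each pair \((i,j)=(\ell_n,m_n)\), \(n\ge 1\). The only new point is uniformity in \(n\): we must verify that the constants entering the pairwise estimate can be taken independent of the depth.

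Fix \(n\ge 1\). Since the solution is strictly positive and \(\phi(t)\) is well defined, Lemma~\ref{lem:log_ratio_identity} applies to \(i=\ell_n\), \(j=m_n\). Using \(\ell_n'=\ell_{n-1}\) and \(m_n'=m_{n-1}\), it gives
\[
\frac{d}{dt}\log\frac{x_{\ell_n}}{x_{m_n}}
=
a_{\ell_n}\frac{x_{\ell_{n-1}}}{x_{\ell_n}}
-a_{m_n}\frac{x_{m_{n-1}}}{x_{m_n}}
-(k_{\ell_n}-k_{m_n})
+\mathcal R\bigl(f_{\ell_n}-f_{m_n}\bigr).
\]
We then bound the four terms from below separately.

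\begin{itemize}
\item The first term is \(\ge 0\), by positivity of \(a_{\ell_n}\) and of the components. For \(n=1\) the predecessor is a root \(1'\) or \(2'\), which is positive by Lemma~\ref{lem:positivity}.
\item The second term is \(\ge -M_{\mathcal M}\) by \eqref{eq:forcing_bound_lineage_M}.
\item The third term is \(\ge -(k_{\ell_n}-k_{m_n})_+\ge-\kappa_+\) by \eqref{eq:kappa_plus_def}.
\item The fourth term is \(\ge \mathcal R\Delta\) by \eqref{eq:Delta_def}, since \(\mathcal R>0\).
\end{itemize}

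Here \(\mathcal R>0\) is implied by \eqref{eq:R_threshold_lineage}: if \(\mathcal R=0\), the right-hand side \(M_{\mathcal M}+\kappa_+\ge 0\) would exceed the left-hand side, which is \(0\). Summing the four bounds yields \(\frac{d}{dt}\log(x_{\ell_n}/x_{m_n})\ge\varepsilon\) with \(\varepsilon=\mathcal R\Delta-M_{\mathcal M}-\kappa_+>0\), and this \(\varepsilon\) does not depend on \(n\) or \(t\).

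Finally, we integrate over \([0,t]\), which is legitimate because \(\log(x_{\ell_n}/x_{m_n})\) is \(C^1\) on \([0,\infty)\) by positivity and the \(C^1\) regularity of the components. Exponentiating gives \eqref{eq:lineage_exp_divergence}. Since \(x_{\ell_n}(0)/x_{m_n}(0)>0\), the ratio tends to \(+\infty\) for each fixed \(n\).

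There is no real obstacle. The only point needing care is that the pairwise hypotheses of Proposition~\ref{lem:ratio_divergence} are supplied uniformly by the sup/inf assumptions, so the resulting rate is depth-independent. Note that the prefactor \(x_{\ell_n}(0)/x_{m_n}(0)\) may depend on \(n\); only the rate \(\varepsilon\) is claimed to be uniform.
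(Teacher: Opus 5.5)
Your proposal is correct and matches the paper's proof essentially step for step. Like the paper, you apply Lemma~\ref{lem:log_ratio_identity} to each pair $(\ell_n,m_n)$, bound the four terms below by $0$, $-M_{\mathcal M}$, $-\kappa_+$ and $\mathcal R\Delta$ using the depth-uniform hypotheses, and then integrate and exponentiate. Your side remarks make explicit two points the paper uses silently: that $\mathcal R>0$ is forced by \eqref{eq:R_threshold_lineage}, and that for $n=1$ the positivity of the root $x_{\ell_0}$ comes from the closed core subsystem shared by \eqref{eq:7} and \eqref{eq:5}.
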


\begin{proof}
Fix \(n\ge 1\). Applying Lemma~\ref{lem:log_ratio_identity} with
\[
i=\ell_n,
\qquad
j=m_n,
\]
we get
\[
\frac{d}{dt}\log\frac{x_{\ell_n}}{x_{m_n}}
=
\left(
a_{\ell_n}\frac{x_{\ell_{n-1}}}{x_{\ell_n}}
-
a_{m_n}\frac{x_{m_{n-1}}}{x_{m_n}}
\right)
-(k_{\ell_n}-k_{m_n})
+
\mathcal R(f_{\ell_n}-f_{m_n}).
\]
Since
\[
a_{\ell_n}\frac{x_{\ell_{n-1}}}{x_{\ell_n}}\ge 0,
\]
and by \eqref{eq:forcing_bound_lineage_M},
\[
-a_{m_n}\frac{x_{m_{n-1}}}{x_{m_n}}
\ge
-M_{\mathcal M},
\]
while by \eqref{eq:kappa_plus_def},
\[
-(k_{\ell_n}-k_{m_n})
\ge
-(k_{\ell_n}-k_{m_n})_+
\ge
-\kappa_+.
\]
Finally, by \eqref{eq:Delta_def},
\[
f_{\ell_n}-f_{m_n}\ge \Delta.
\]
Therefore, for all \(n\ge 1\) and all \(t\ge 0\),
\[
\frac{d}{dt}\log\frac{x_{\ell_n}(t)}{x_{m_n}(t)}
\ge
-M_{\mathcal M}-\kappa_+
+
\mathcal R\Delta
=
\varepsilon>0.
\]
Integrating over \([0,t]\) gives
\[
\log\frac{x_{\ell_n}(t)}{x_{m_n}(t)}
-
\log\frac{x_{\ell_n}(0)}{x_{m_n}(0)}
\ge
\varepsilon t.
\]
Exponentiating yields \eqref{eq:lineage_exp_divergence}. Hence, for every fixed \(n\ge 1\),
\[
\frac{x_{\ell_n}(t)}{x_{m_n}(t)}\to+\infty
\qquad
\text{as }t\to\infty.
\]
The rate \(\varepsilon\) is independent of \(n\), although the prefactor
\[
\frac{x_{\ell_n}(0)}{x_{m_n}(0)}
\]
may depend on \(n\).
\end{proof}

\begin{remark}
If one wants a divergence estimate that is also uniform in the prefactor over all depths, it is enough to assume
\[
\inf_{n\ge 1}
\frac{x_{\ell_n}(0)}{x_{m_n}(0)}
>0.
\]
Under this additional condition, \eqref{eq:lineage_exp_divergence} implies a depth-uniform lower bound of the form
\[
\frac{x_{\ell_n}(t)}{x_{m_n}(t)}
\ge
c_0 e^{\varepsilon t}
\qquad
\text{for all } n\ge 1,\ t\ge 0,
\]
where
\[
c_0:=
\inf_{n\ge 1}
\frac{x_{\ell_n}(0)}{x_{m_n}(0)}
>0.
\]
\end{remark}

\section{Numerical results}
\label{sec:numerics}

We now present numerical illustrations of the analytical results. Following the numerical style of prevolutionary dynamics in \cite{Nowak2008}, we display equilibrium abundances as functions of biologically meaningful control parameters: the kinetic selection intensity \(s\), the compound resource parameter \(rK\), and the replication intensity \(\mathcal R\). Since the full polymerization network is infinite-dimensional, all simulations are performed on finite binary truncations
\[
\mathcal S_N=\bigcup_{n=1}^{N}\{1,2\}^n.
\]
Unless otherwise stated, we use \(N=6\). The truncation contains
\[
|\mathcal S_N|=\sum_{n=1}^{N}2^n=2^{N+1}-2
\]
polymer variables.

\subsection{Kinetic selection in the pre-replicative polymerization landscape}

We first illustrate how kinetic asymmetries alone can create unequal equilibrium abundances before replication is introduced. For each sequence \(i\), the equilibrium abundance is
\[
x_i^*
=
\prod_{r=1}^{L(i)}
\frac{a_{i^{(r)}}}{d+a_{i^{(r)}1}+a_{i^{(r)}2}}.
\]
Thus the polymerization rates define a pre-replicative abundance landscape.

In the random landscape experiment, we introduce kinetic heterogeneity by assigning each polymerization reaction one of two possible rates. More precisely, for every reaction leading to a sequence \(q\), we set
\[
a_q(s)=1+s\eta_q,
\]
where \(\eta_q\in\{0,1\}\). A fixed fraction \(p\) of the reactions is chosen to be fast, meaning \(\eta_q=1\), while the remaining reactions are slow, meaning \(\eta_q=0\). Thus fast reactions have rate \(1+s\), whereas slow reactions have rate \(1\). The same random assignment of fast and slow reactions is kept fixed as \(s\) varies.

The parameter \(s\ge0\) therefore measures the strength of the kinetic contrast between fast and slow reactions. When \(s=0\), all reactions have the same rate and the polymerization landscape is kinetically homogeneous. As \(s\) increases, fast reactions become progressively more favorable relative to slow ones, so that lineages containing several fast reactions along their ancestral paths can become strongly enriched.

For each value of \(s\), we compute the equilibrium \(x_i^*(s)\) on the finite truncation \(\mathcal S_N\). Since
\[
x_i^*(s)
=
\prod_{r=1}^{L(i)}
\frac{a_{i^{(r)}}(s)}
{d+a_{i^{(r)}1}(s)+a_{i^{(r)}2}(s)},
\]
the effect of \(s\) is cumulative along the ancestral chain of \(i\). A sequence whose lineage contains many fast incoming reactions may gain a multiplicative advantage, whereas fast outgoing reactions from an intermediate sequence may also increase its loss term through the denominator. Hence the abundance landscape depends on the balance between enhanced production into a sequence and enhanced depletion toward its followers.

Figure~\ref{fig:random-prelife-landscape} plots all equilibrium abundances as functions of \(s\). As \(s\) increases, the initially homogeneous landscape becomes increasingly structured: a small subset of lineages becomes highly enriched, while other lineages are suppressed. This illustrates a selection-like effect generated purely by kinetic asymmetries, before template-directed replication is introduced.

\begin{figure}[h]
\centering
\includegraphics[width=0.75\textwidth]{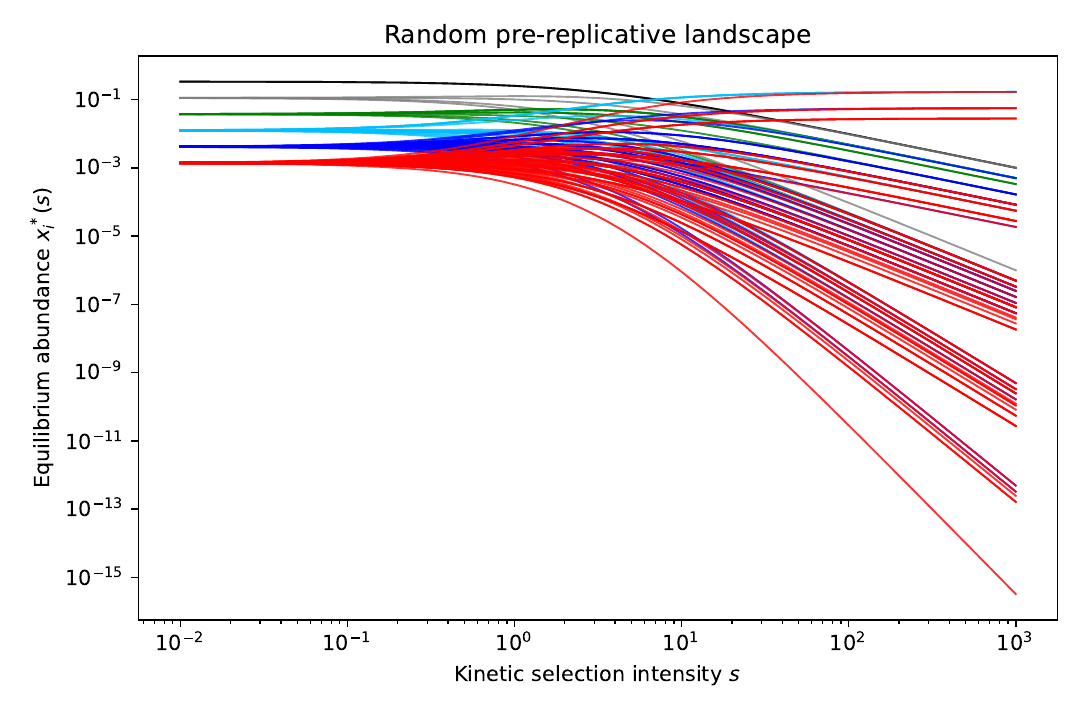}
\caption{Random kinetic landscape. Equilibrium abundances \(x_i^*(s)\) for all sequences of length \(1\le L(i)\le 6\), plotted as functions of the kinetic selection intensity \(s\). A fraction \(p=1/2\) of reactions has rate \(1+s\), while the remaining reactions have rate \(1\).}
\label{fig:random-prelife-landscape}
\end{figure}

We also consider a single-lineage landscape in which one prescribed master sequence \(i^\star\) of length \(N\) receives increased production rates along its entire ancestral path. Specifically, all reactions along the lineage leading to \(i^\star\) have rate \(1+s\), while all other reactions have rate \(1\). Figure~\ref{fig:master-landscape} shows that, for small \(s\), the master sequence is not strongly distinguished from the background. As \(s\) grows, however, the master lineage becomes increasingly dominant among sequences of the same length.

\begin{figure}[h]
\centering
\includegraphics[width=0.75\textwidth]{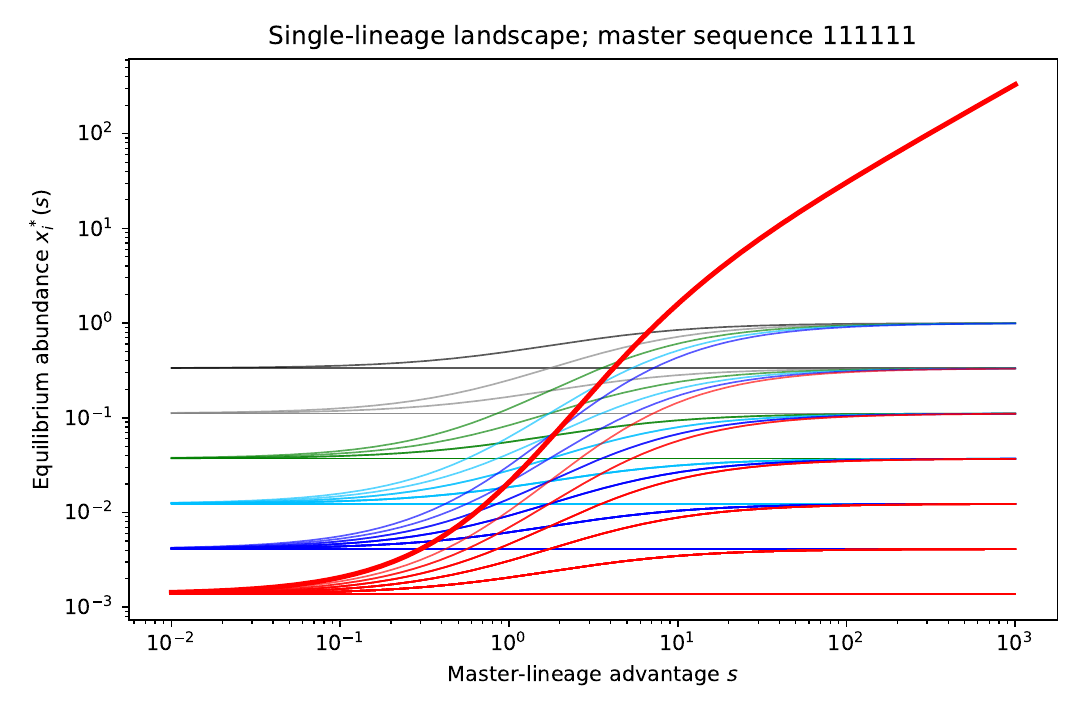}
\caption{Single-lineage kinetic landscape. The master sequence \(i^\star=111111\) is favored by assigning rate \(1+s\) to all reactions along its ancestral path. The master lineage becomes increasingly enriched as \(s\) increases.}
\label{fig:master-landscape}
\end{figure}

\subsection{Resource threshold and bifurcation of positive equilibria}

For the coupled polymerization with competition system, positive equilibria exist if and only if
\[
rK\ge 8.
\]
When \(rK>8\), the two positive resource equilibria are
\[
C_\pm^*
=
\frac{K}{2}
\left(
1\pm \sqrt{1-\frac{8}{rK}}
\right).
\]
The reduced \(C\)-dynamics has linearization coefficient
\[
f'(C^*)=\frac{r}{K}(K-2C^*).
\]
Hence \(C_-^*<K/2\) is unstable, while \(C_+^*>K/2\) is locally stable. Figure~\ref{fig:resource-bifurcation} shows the two branches as functions of \(rK\). The critical value \(rK=8\) marks the saddle-node type threshold at which positive resource equilibria appear.

\begin{figure}[h]
\centering
\includegraphics[width=0.7\textwidth]{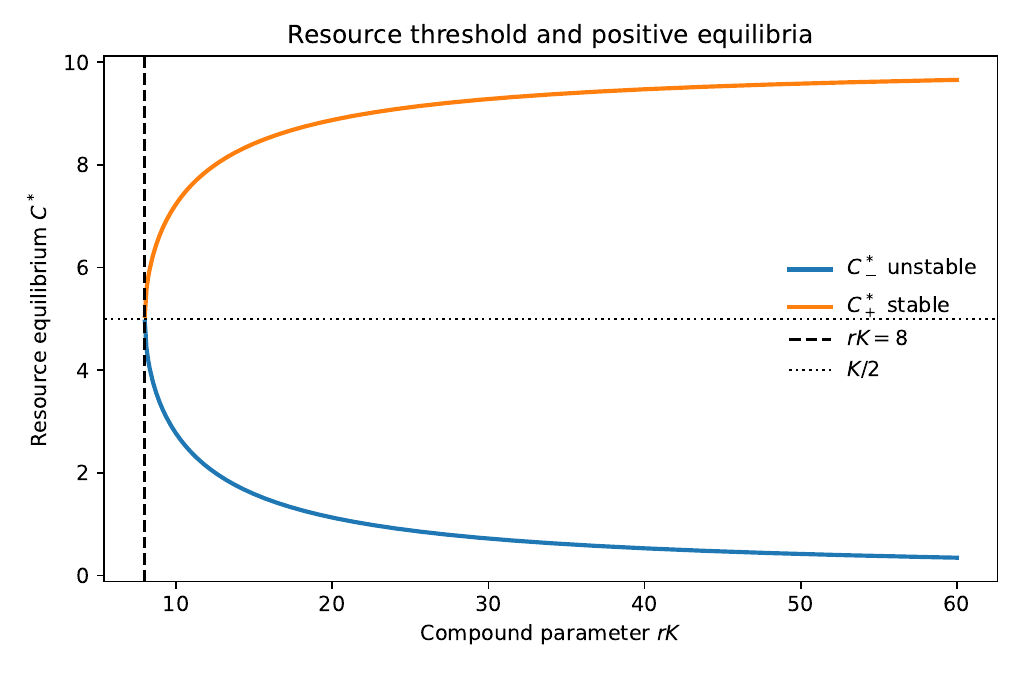}
\caption{Resource bifurcation diagram. Positive equilibria exist only when \(rK\ge 8\). For \(rK>8\), the lower branch \(C_-^*\) is unstable and the upper branch \(C_+^*\) is stable for the reduced \(C\)-dynamics.}
\label{fig:resource-bifurcation}
\end{figure}

For example, if \(K=10\) and \(r=1.2\), then \(rK=12>8\), and
\[
C_-^*\approx 2.113,
\qquad
C_+^*\approx 7.887.
\]
Taking \(\beta_1=1\) and \(\beta_2=1.3\), the associated activated precursor equilibria on the stable branch are
\[
x_{1'}^*=\frac{1}{\beta_1C_+^*}\approx 0.127,
\qquad
x_{2'}^*=\frac{1}{\beta_2C_+^*}\approx 0.098.
\]

\subsection{Exponential convergence of the coupled truncated system}

We next simulate the finite truncation of the polymerization--competition model:
\[
\begin{aligned}
\dot x_{1'}&=1-\beta_1Cx_{1'},\\
\dot x_{2'}&=1-\beta_2Cx_{2'},\\
\dot C&=C\left[r\left(1-\frac{C}{K}\right)-\beta_1x_{1'}-\beta_2x_{2'}\right],\\
\dot x_i&=a_ix_{i'}-\left(d+a_{i1}+a_{i2}\right)x_i,
\qquad i\in\mathcal S_N.
\end{aligned}
\]
We use
\[
d=1,\qquad a_i=0.2,\qquad a_{i1}=a_{i2}=0.1,
\qquad K=10,\qquad r=1.2,\qquad \beta_1=1,\qquad \beta_2=1.3.
\]
The stable equilibrium is determined by \(C^*=C_+^*\), together with
\[
x_{1'}^*=\frac{1}{\beta_1C^*},
\qquad
x_{2'}^*=\frac{1}{\beta_2C^*},
\qquad
x_i^*=
\frac{a_i}{d+a_{i1}+a_{i2}}x_{i'}^*.
\]
To quantify convergence, define the truncated Lyapunov error
\[
V_N(t)
=
|C(t)-C^*|^2
+
|x_{1'}(t)-x_{1'}^*|^2
+
|x_{2'}(t)-x_{2'}^*|^2
+
\sum_{i\in\mathcal S_N}|x_i(t)-x_i^*|^2.
\]
Figure~\ref{fig:coupled-convergence} shows exponential decay of \(V_N(t)\), consistent with the Lyapunov estimate proved for the infinite system under the stated dominance assumptions.

\begin{figure}[h]
\centering
\includegraphics[width=0.72\textwidth]{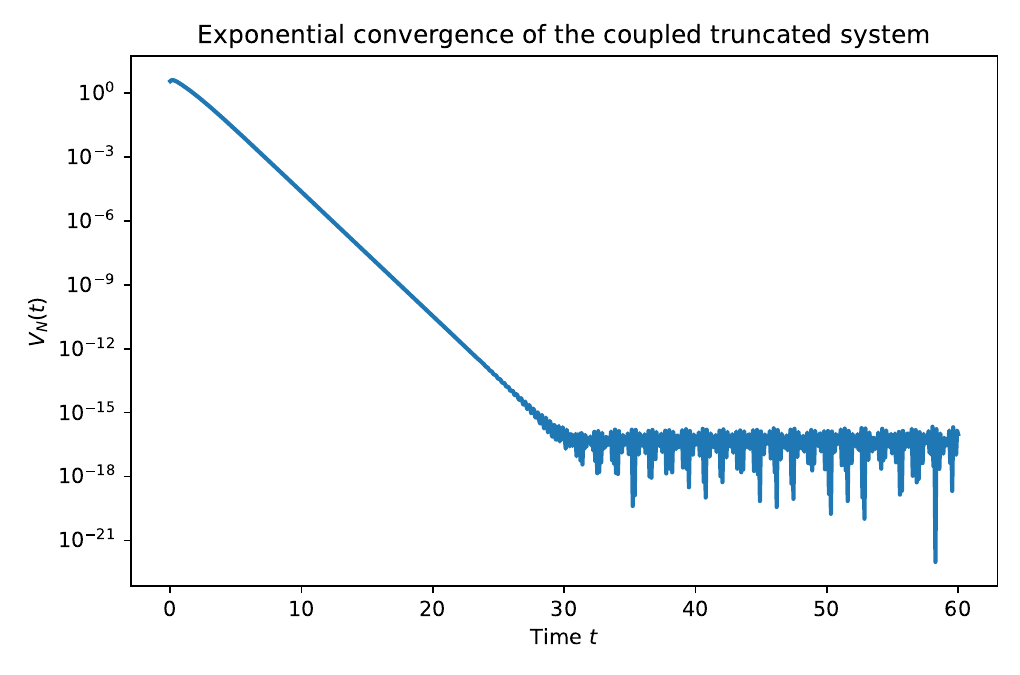}
\caption{Convergence of the truncated polymerization--competition system toward the stable equilibrium associated with \(C_+^*\). The semilogarithmic plot of \(V_N(t)\) shows exponential decay.}
\label{fig:coupled-convergence}
\end{figure}

\subsection{Competition between prelife and replication}

Finally, we illustrate the transition produced by template-directed replication. We use the replicated finite system
\[
\dot x_i
=
a_ix_{i'}-\left(d+a_{i1}+a_{i2}\right)x_i
+
\mathcal R x_i(f_i-\phi),
\qquad
i\in\mathcal S_N,
\]
with
\[
\phi(t)
=
\frac{\sum_{i\in\mathcal S_N}f_ix_i(t)}
{\sum_{i\in\mathcal S_N}x_i(t)}.
\]
Only sequences of maximal length \(N\) are assigned nonzero heterogeneous fitness values; shorter sequences are assigned neutral background fitness. This mirrors the experiment in which long polymers can become replicators while shorter polymers remain part of the pre-replicative chemical background.

For each value of \(\mathcal R\), the system is integrated until it approaches a numerical steady state. Figure~\ref{fig:replication-sweep} plots the resulting abundances as functions of the replication intensity. For small \(\mathcal R\), the abundance structure is dominated by polymerization and longer sequences remain rare. As \(\mathcal R\) increases, the fittest replicating sequence becomes strongly enriched, producing a transition from a pre-replicative regime to a replication-dominated regime.

\begin{figure}[H]
\centering
\includegraphics[width=0.75\textwidth]{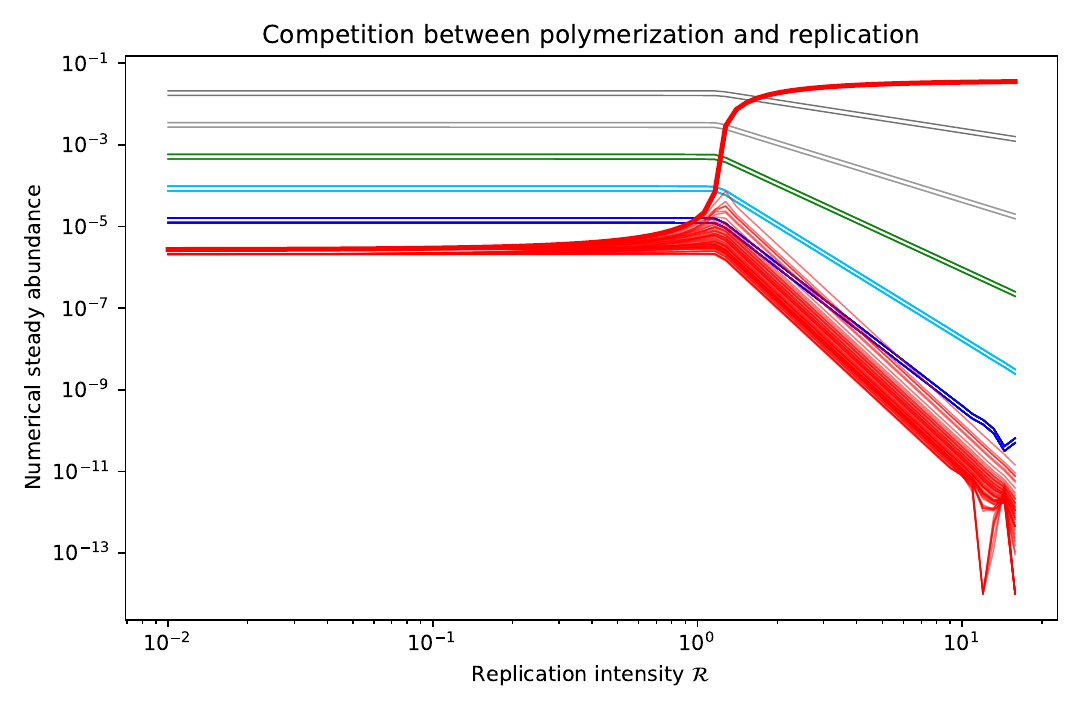}
\caption{Competition between pre-replicative polymerization and template-directed replication. Equilibrium abundances are plotted as functions of the replication intensity \(\mathcal R\). For sufficiently large \(\mathcal R\), the fittest replicator dominates.}
\label{fig:replication-sweep}
\end{figure}

A complementary way to visualize selection is to compare two sequences \(i\) and \(j\) with different fitness values. The replicated dynamics gives
\[
\frac{d}{dt}\log\frac{x_i}{x_j}
=
\left(a_i\frac{x_{i'}}{x_i}
-
a_j\frac{x_{j'}}{x_j}\right)
-
(k_i-k_j)
+
\mathcal R(f_i-f_j),
\]
because the average fitness \(\phi\) cancels. Therefore, if the replication advantage dominates the kinetic forcing and decay imbalance, the log-ratio grows approximately linearly. Figure~\ref{fig:log-ratio} shows this effect for \(i=111111\) and \(j=222222\), with \(f_i>f_j\). Increasing \(\mathcal R\) produces faster separation between the two lineages.

\begin{figure}[H]
\centering
\includegraphics[width=0.72\textwidth]{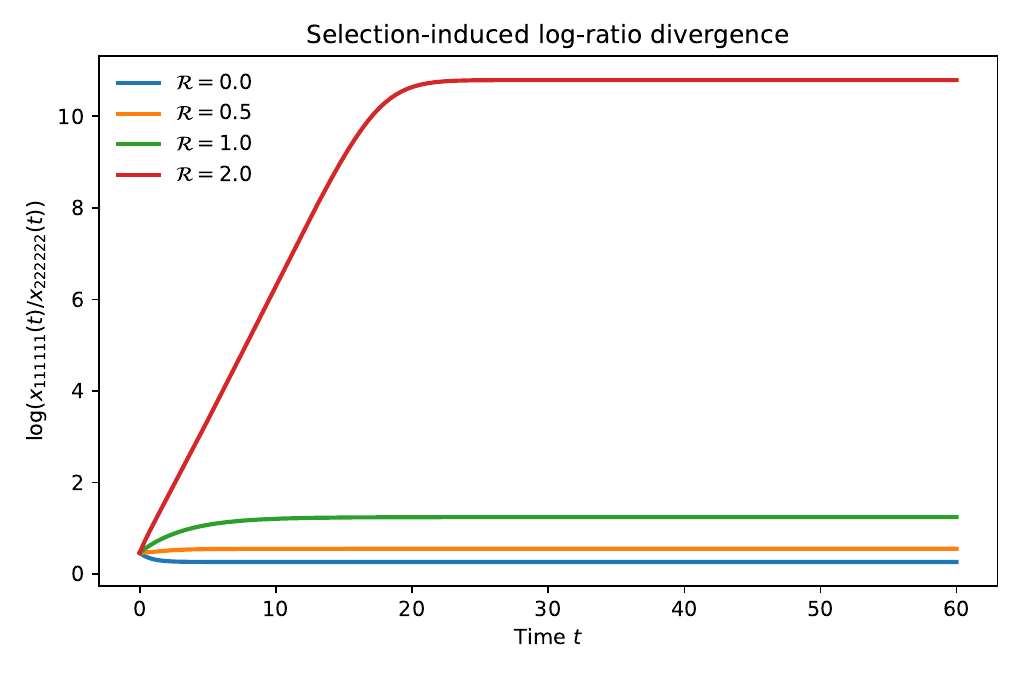}
\caption{Selection-induced divergence between two lineages. For sufficiently large \(\mathcal R\), the log-ratio \(\log(x_i(t)/x_j(t))\) grows approximately linearly, indicating exponential separation of abundances.}
\label{fig:log-ratio}
\end{figure}

These numerical experiments support the analytical picture developed above. The purely kinetic system converges to a unique compositional background; the resource-coupled system exhibits a sharp feasibility threshold \(rK=8\); and heterogeneous replication breaks the previous equilibrium by amplifying fitter lineages.

\FloatBarrier
\section{Concluding remarks}

In this paper, we developed a deterministic framework for sequence polymerization on the directed genealogy of finite molecules. The model separates three increasingly rich dynamical regimes, a purely kinetic polymerization cascade with fixed activated precursors, a polymerization--competition system, and a replicated system incorporating template-directed selection. This progression allows one to distinguish between compositional structure generated by kinetic production alone and genuine selection effects produced by differential replication.

For the baseline polymerization model, the infinite triangular structure of the equations makes it possible to construct solutions recursively along the polymerization tree. This yields global componentwise well-posedness, positivity, and a unique strictly positive equilibrium. The equilibrium has an explicit product representation along ancestor chains, showing that each sequence abundance is computed by the cumulative balance between production from its precursor and losses due to degradation and further extension. Moreover, every component converges exponentially to its equilibrium value. Under the additional uniform net decay condition
\[
\inf_{i\in\mathcal S}(k_i-a_i)>0,
\]
this convergence strengthens to global exponential stability in \(\ell^\infty(\mathcal S)\).

The binary model with resource reveals a second layer of structure. Once the activated precursor concentrations are allowed to evolve together with a shared environmental resource \(C\), the existence of strictly positive equilibria is governed by the scalar threshold condition
\[
rK\ge 8.
\]
At the critical value \(rK=8\), a unique positive resource level appears, whereas for \(rK>8\) two positive resource branches exist. The reduced scalar \(C\)-dynamics identifies the upper branch \(C_+^*\) as the locally attracting branch and the lower branch \(C_-^*\) as unstable.

We also established the basic dynamical consistency of the resource-coupled system. Positivity is forward invariant, global componentwise solutions exist for strictly positive initial data, and the resource variable admits an explicit logistic upper bound. A further lower-barrier condition for \(C(t)\) provides the persistence estimate needed to prevent degeneration of the resource level. Combining this persistence with uniform kinetic dominance assumptions gives a Lyapunov estimate and exponential convergence toward the positive equilibrium associated with the stable resource branch. In this way, the infinite polymerization cascade inherits stability from a controlled finite-dimensional core, provided that the coupling terms are dominated by sufficient dissipation.

The addition of replication changes the qualitative picture. When the fitness landscape is neutral, the replication term vanishes identically and the pre-replicative equilibrium is preserved. By contrast, heterogeneous fitness destroys the previous polymerization--competition equilibrium: unless all fitness values coincide with the equilibrium average fitness, the replicated vector field does not vanish at the old equilibrium. Thus differential replication is not just a perturbation of the kinetic background, it creates a genuinely new dynamical scenario.

The log-ratio identities make this transition especially transparent. For any two positive sequence concentrations \(x_i\) and \(x_j\), the average fitness term cancels from
\[
\frac{d}{dt}\log\frac{x_i}{x_j}.
\]
Then, relative growth is governed by three competing effects: kinetic forcing from precursors, differences in decay and extension losses, and the direct fitness gap multiplied by the replication intensity \(\mathcal R\). When the replication advantage dominates the kinetic and decay imbalance, the log-ratio grows at least linearly in time, and the concentration ratio diverges exponentially. The same mechanism extends to entire lineages under a uniform fitness gap, giving lineage-level separation at a rate independent of depth.

The numerical experiments support the mathematical analysis. In the pre-replicative scenario, kinetic asymmetries alone generate nonuniform abundance landscapes, including enrichment of favored lineages. In the resource-coupled system, the bifurcation diagram illustrates the threshold \(rK=8\) and the emergence of the stable upper resource branch. Simulations of finite truncations show exponential decay of the Lyapunov error toward the stable equilibrium. Finally, increasing the replication intensity produces a transition from a polymerization-dominated scenario to a replication-dominated scenario in which fitter sequences separate exponentially from their competitors.

Overall, the results suggest a coherent mathematical description for the emergence of evolutionary dynamics from open polymerization chemistry. A driven polymerization network first creates a stable compositional background; resource coupling imposes feasibility thresholds and selects admissible steady regimes; and heterogeneous replication breaks the pre-replicative equilibrium by amplifying fitter lineages. In this sense, selection appears as a dynamical instability of the kinetic background once replication becomes strong enough to overcome production and decay constraints.

Several directions are still open. One is to weaken the sufficient dominance assumptions used in the \(\ell^2\)-Lyapunov argument, or to replace them with sharper spectral or semigroup conditions. A further direction is to incorporate mutation, catalytic feedback, compartmentalization, or time-dependent environmental forcing. Such directions would bring the model closer to realistic prebiotic scenarios, where polymerization, degradation, resource limitation, and replication occur simultaneously in fluctuating far from equilibrium contexts.


\begin{thebibliography}{99}
\bibitem[Schopf(2024)]{Schopf24}
Schopf, J. W. (2024). \textit{Pioneers of origin of life studies--Darwin, Oparin, Haldane, Miller, Or\'{o}--and the oldest known records of life}.
\textit{Life}, 14(10), 1345.
https://doi.org/10.3390/life14101345

\bibitem[Kauffman(2011)]{kauffman2011approaches}
Kauffman, Stuart A (2011).
\textit{Approaches to the origin of life on earth}.
\textit{Life} 1(1), 34--48
https://doi.org/10.3390/life1010034

\bibitem[Agutter and Wheatley(2008)]{agutter2008thinking}
Agutter, Paul S \& Wheatley, Denys N (2008).
\textit{Thinking about life: the history and philosophy of biology and other sciences}.
\textit{Springer}, p.  224.
https://doi.org/10.1007/978-1-4020-8866-7


\bibitem[Eigen(1971)]{Eigen1971}
Eigen, M. (1971).
\textit{Selforganization of matter and the evolution of biological macromolecules}.
\textit{Naturwissenschaften}, 58, 465--523.
https://doi.org/10.1007/BF00623322

\bibitem[Eigen and Schuster(1977)]{EigenSchuster1977}
Eigen, M., \& Schuster, P. (1977).
\textit{A principle of natural self-organization. Part A: Emergence of the hypercycle}.
\textit{Naturwissenschaften}, 64, 541--565.
https://doi.org/10.1007/BF00450633

\bibitem[Nowak and Ohtsuki(2008)]{Nowak2008}
Nowak, M. A., \& Ohtsuki, H. (2008).
\textit{Prevolutionary dynamics and the origin of evolution}.
\textit{Proceedings of the National Academy of Sciences of the United States of America}, 105(39), 14924--14927.
https://doi.org/10.1073/pnas.0806714105

\bibitem[Manapat et al.(2009)]{Manapat2009}
Manapat, M., Ohtsuki, H., Bürger, R., \& Nowak, M. A. (2009).
\textit{Originator dynamics}.
\textit{Journal of Theoretical Biology}, 256(4), 586--595.
https://doi.org/10.1016/j.jtbi.2008.10.006

\bibitem[Sutherland(2017)]{Sutherland2017}
Sutherland, J. D. (2017).
\textit{Opinion: Studies on the origin of life --- the end of the beginning}.
\textit{Nature Reviews Chemistry}, 1, 0012.
https://doi.org/10.1038/s41570-016-0012

\bibitem[Pascal et al.(2013)]{Pascal2013}
Pascal, R., Pross, A., \& Sutherland, J. D. (2013).
\textit{Towards an evolutionary theory of the origin of life based on kinetics and thermodynamics}.
\textit{Open Biology}, 3, 130156.
https://doi.org/10.1098/rsob.130156

\bibitem[Baaske et al.(2007)]{Baaske2007}
Baaske, P., Weinert, F. M., Duhr, S., Lemke, K. H., Russell, M. J., \& Braun, D. (2007).
\textit{Extreme accumulation of nucleotides in simulated hydrothermal pore systems}.
\textit{Proceedings of the National Academy of Sciences of the United States of America}, 104(22), 9346--9351.
https://doi.org/10.1073/pnas.0609592104

\bibitem[Ianeselli et al.(2023)]{Ianeselli2023}
Ianeselli, A., Salditt, A., Mast, C., Ercolano, B., Kufner, C. L., Scheu, B., \& Braun, D. (2023).
\textit{Physical non-equilibria for prebiotic nucleic acid chemistry}.
\textit{Nature Reviews Physics}, 5, 185--195.
https://doi.org/10.1038/s42254-022-00550-3

\bibitem[Yi et al.(2020)]{Yi2020}
Yi, R., Tran, Q. P., Ali, S., Yoda, I., Adam, Z. R., Cleaves II, H. J., \& Fahrenbach, A. C. (2020).
\textit{A continuous reaction network that produces RNA precursors}.
\textit{Proceedings of the National Academy of Sciences of the United States of America}, 117(24), 13267--13274.
https://doi.org/10.1073/pnas.1922139117

\bibitem[Zhang et al.(2022)]{Zhang2022}
Zhang, S. J., Duzdevich, D., Ding, D., \& Szostak, J. W. (2022).
\textit{Freeze-thaw cycles enable a prebiotically plausible and continuous pathway from nucleotide activation to nonenzymatic RNA copying}.
\textit{Proceedings of the National Academy of Sciences of the United States of America}, 119, e2116429119.
https://doi.org/10.1073/pnas.2116429119

\bibitem[Yeates et al.(2016)]{Yeates2016}
Yeates, J. A. M., Hilbe, C., Zwick, M., Nowak, M. A., \& Lehman, N. (2016).
\textit{Dynamics of prebiotic RNA reproduction illuminated by chemical game theory}.
\textit{Proceedings of the National Academy of Sciences of the United States of America}, 113(18), 5030--5035.
https://doi.org/10.1073/pnas.1525273113

\bibitem[Taylor and Jonker(1978)]{TaylorJonker1978}
Taylor, P. D., \& Jonker, L. B. (1978).
\textit{Evolutionary stable strategies and game dynamics}.
\textit{Mathematical Biosciences}, 40, 145--156.
https://doi.org/10.1016/0025-5564(78)90077-9


\end{thebibliography}
\end{document}